 %File: anonymous-submission-latex-2024.tex
\documentclass[letterpaper]{article} % DO NOT CHANGE THIS
\usepackage{aaai24}  % DO NOT CHANGE THIS
\usepackage{times}  % DO NOT CHANGE THIS
\usepackage{helvet}  % DO NOT CHANGE THIS
\usepackage{courier}  % DO NOT CHANGE THIS
\usepackage[hyphens]{url}  % DO NOT CHANGE THIS
\usepackage{graphicx} % DO NOT CHANGE THIS
\urlstyle{rm} % DO NOT CHANGE THIS
  % DO NOT CHANGE THIS
\usepackage{natbib}  % DO NOT CHANGE THIS AND DO NOT ADD ANY OPTIONS TO IT
\usepackage{caption} % DO NOT CHANGE THIS AND DO NOT ADD ANY OPTIONS TO IT
\frenchspacing  % DO NOT CHANGE THIS
\setlength{\pdfpagewidth}{8.5in} % DO NOT CHANGE THIS
\setlength{\pdfpageheight}{11in} % DO NOT CHANGE THIS
%
% These are recommended to typeset algorithms but not required. See the subsubsection on algorithms. Remove them if you don't have algorithms in your paper.

\usepackage{algorithm}
%\usepackage{algorithmic}

%
% These are are recommended to typeset listings but not required. See the subsubsection on listing. Remove this block if you don't have listings in your paper.
\usepackage{newfloat}
\usepackage{listings}
\DeclareCaptionStyle{ruled}{labelfont=normalfont,labelsep=colon,strut=off} % DO NOT CHANGE THIS
\lstset{%
	basicstyle={\footnotesize\ttfamily},% footnotesize acceptable for monospace
	numbers=left,numberstyle=\footnotesize,xleftmargin=2em,% show line numbers, remove this entire line if you don't want the numbers.
	aboveskip=0pt,belowskip=0pt,%
	showstringspaces=false,tabsize=2,breaklines=true}
\floatstyle{ruled}
\newfloat{listing}{tb}{lst}{}
\floatname{listing}{Listing}
%
% Keep the \pdfinfo as shown here. There's no need
% for you to add the /Title and /Author tags.
\pdfinfo{
/TemplateVersion (2024.1)
}

\setcounter{secnumdepth}{2} %May be changed to 1 or 2 if section numbers are desired.

%%%%%%%%%%%%%%%%%%%%%%%%%%%%%%
% Added Packages
%%%%%%%%%%%%%%%%%%%%%%%%%%%%%%
\usepackage{amsmath}
\usepackage{amssymb}
\usepackage{amsthm}
\usepackage{tikz}
\usepackage{pgfplots}
\usepackage{graphicx}
\usepackage{caption}
\usepackage{subcaption}
\usepackage{xcolor}
\allowdisplaybreaks  
\usepackage{cite}
\usepackage{algorithm}
\usepackage[noend]{algcompatible}
\usepackage{mathtools}
\usepackage{subcaption}
\usepackage{cleveref}
\usepackage{multirow}

\frenchspacing
\allowdisplaybreaks 

%\theoremstyle{definition}
 % definition numbers are dependent on theorem numbers
\newtheorem{theorem}{Theorem} % same for example numbers
 % same for example numbers
 % same for example numbers
\newtheorem{remark}{Remark}

\newtheorem{proposition}{Proposition}
\newtheorem{corollary}{Corollary}

\usepackage{comment}

\usepackage{url}

% The file aaai24.sty is the style file for AAAI Press
% proceedings, working notes, and technical reports.
%

% Title

% Your title must be in mixed case, not sentence case.
% That means all verbs (including short verbs like be, is, using,and go),
% nouns, adverbs, adjectives should be capitalized, including both words in hyphenated terms, while
% articles, conjunctions, and prepositions are lower case unless they
% directly follow a colon or long dash
\iffalse
\title{AAAI Press Anonymous Submission\\Instructions for Authors Using \LaTeX{}}
\author{
    %Authors
    % All authors must be in the same font size and format.
    Written by AAAI Press Staff\textsuperscript{\rm 1}\thanks{With help from the AAAI Publications Committee.}\\
    AAAI Style Contributions by Pater Patel Schneider,
    Sunil Issar,\\
    J. Scott Penberthy,
    George Ferguson,
    Hans Guesgen,
    Francisco Cruz\equalcontrib,
    Marc Pujol-Gonzalez\equalcontrib
}
\affiliations{
    %Afiliations
    \textsuperscript{\rm 1}Association for the Advancement of Artificial Intelligence\\
    % If you have multiple authors and multiple affiliations
    % use superscripts in text and roman font to identify them.
    % For example,

    % Sunil Issar\textsuperscript{\rm 2},
    % J. Scott Penberthy\textsuperscript{\rm 3},
    % George Ferguson\textsuperscript{\rm 4},
    % Hans Guesgen\textsuperscript{\rm 5}
    % Note that the comma should be placed after the superscript

    1900 Embarcadero Road, Suite 101\\
    Palo Alto, California 94303-3310 USA\\
    % email address must be in roman text type, not monospace or sans serif
    proceedings-questions@aaai.org
%
% See more examples next
}
\fi
%Example, Single Author, ->> remove \iffalse,\fi and place them surrounding AAAI title to use it
\iffalse
\title{My Publication Title --- Single Author}
\author {
    Author Name
}
\affiliations{
    Affiliation\\
    Affiliation Line 2\\
    name@example.com
}
\fi

%Example, Multiple Authors, ->> remove \iffalse,\fi and place them surrounding AAAI title to use it

\title{Conformal Prediction Regions for Time Series\\ using Linear Complementarity Programming}
\author {
    % Authors
    Matthew Cleaveland\textsuperscript{\rm 1},
    Insup Lee\textsuperscript{\rm 2},
    George J. Pappas\textsuperscript{\rm 1}
    Lars Lindemann\textsuperscript{\rm 3}
}
\affiliations {
    % Affiliations
    \textsuperscript{\rm 1}Department of Electrical and Systems Engineering, University of Pennsylvania, Philadelphia, PA 19104, USA.\\
    \textsuperscript{\rm 2}Department of Computer and Information Science, University of Pennsylvania, Philadelphia, PA 19104, USA.\\
    \textsuperscript{\rm 3}Department of Computer Science, University of Southern California, Los Angeles, CA 90089, USA.
    mcleav@seas.upenn.edu, lee@seas.upenn.edu, pappasg@seas.upenn.edu, llindema@usc.edu
}

% REMOVE THIS: bibentry
% This is only needed to show inline citations in the guidelines document. You should not need it and can safely delete it.
% \usepackage{bibentry}
% END REMOVE bibentry

\begin{document}

\maketitle

\begin{abstract}
    Conformal prediction is a statistical tool for producing prediction regions of machine learning models that are valid with high probability. However, applying conformal prediction to  time series data leads to  conservative prediction regions. In fact,  to obtain prediction regions over $T$ time steps with confidence $1-\delta$, {previous works require that each individual prediction region is valid} with confidence $1-\delta/T$. We propose an optimization-based method for  reducing this conservatism to enable long horizon planning and verification when using learning-enabled time series predictors. Instead of considering prediction errors individually at each time step, we consider a parameterized prediction error over multiple time steps. By optimizing the parameters over an additional dataset, we find  prediction regions that are not conservative. We show that this problem can be cast as a mixed integer linear complementarity program (MILCP), which we then relax into a linear complementarity program (LCP). Additionally, we prove that the relaxed LP has the same optimal cost as the original MILCP. Finally, we demonstrate the efficacy of our method on case studies using pedestrian trajectory predictors and F16 fighter jet altitude predictors.
\end{abstract}

%%%%%%%%%%%%%%%%%%%%%%%
%% Introduction
%%%%%%%%%%%%%%%%%%%%%%%

\section{Introduction}
\label{sec:intro}

Autonomous systems perform safety-critical tasks in dynamic  environments where system errors can be dangerous and costly, e.g., a mistake of a self-driving car navigating in urban traffic. In such scenarios, accurate uncertainty quantification is vital for ensuring safety of learning-enabled  systems. In this work, we focus on quantifying uncertainty of time series data. For instance, for a {given}  time series predictor that predicts the future trajectory of a pedestrian, how can we quantify the uncertainty and  accuracy of this predictor?

Conformal prediction (CP) has emerged as a popular method for statistical uncertainty quantification \cite{shafer2008tutorial,vovk2005algorithmic}. It aims to construct  prediction regions that contain the true quantity of interest with a user-defined probability. CP does not require any assumptions about the underlying distribution or the  predictor itself. Instead, one only {needs calibration data} that is exchangeable or independent and identically distributed. That means that CP can seamlessly be applied to learning-enabled predictors like neural networks, without the need to analyze the underlying architecture or change the training procedure \cite{angelopoulos2021gentle}.

One challenge of CP is that its standard variant can not directly be used to construct valid prediction regions for time series. This is because datapoints at different time steps are not exchangeable. Recently, variants of CP have been developed for time series, such as adaptive conformal inference \cite{gibbs2021adaptive,zaffran2022adaptive}. However, their coverage guarantees  are weaker and only asymptotic. While these types of guarantees are useful in many contexts, they are insufficient in  safety-critical applications such as self-driving cars.
\begin{figure}
    \centering
    \includegraphics[scale=0.45]{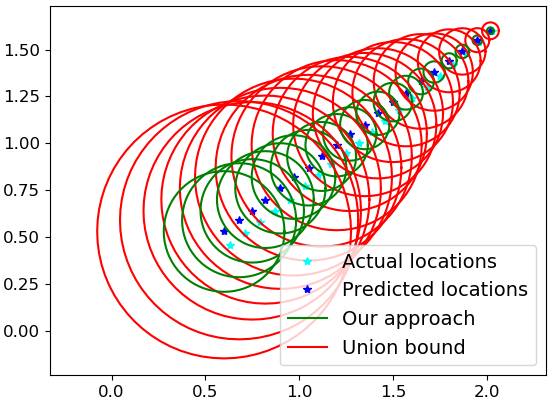}
    %\vspace{-0.3cm}
        \caption{Sample trajectory of a pedestrian (teal stars), the pedestrian predictions using a social LSTM from \cite{alahi2016social} (blue stars), the conformal prediction regions for our approach (green circles), and the conformal prediction regions for the approach from \cite{lindemann2022safe} (red circles). Notably, our approach is less conservative.}
        \label{fig:ORCATrace}
\end{figure}

In previous works \cite{stankeviciute2021conformal,lindemann2022safe,lindemann2023conformal}, this issue was mitigated by instead considering a calibration dataset that consists of full rollouts of the time series, which do not violate the exchangeability assumption. {At design time, one runs a separate instance of CP for each of the $T$ time steps of the time series predictor (also referred to as a trajectory predictor). Then at runtime, these CP instances are combined with the predictor's $T$ predictions to compute prediction regions. }  However, to obtain
prediction regions that are valid over all $T$ time steps with confidence $1-\delta$, each individual prediction region has
to be valid with confidence $1 -\delta/T$.\footnote{A  union bound argument can be used to show this result.} This results in conservative prediction regions.

\textbf{Contributions: } To address this conservatism, this paper proposes constructing prediction regions for time series using linear complementary programming. Our main idea is based on i) mapping {trajectory prediction errors} to a single parameterized conformal scoring function to avoid union bounding, and ii) finding the optimal parameters from an additional {calibration} dataset. As opposed to existing works, our method enables non-conservative long horizon planning and verification. Our contributions are as follows:
\begin{itemize}
    \item We propose the conformal scoring function $R:=\max(\alpha_1R_1,\hdots,\alpha_TR_T)$ for time horizon $T$, prediction errors $R_t$, and parameters $\alpha_t$. Using $R$, we show that we can obtain non-conservative prediction regions   with $1-\delta$ confidence using conformal prediction.
    \item We formulate the problem of finding the parameters $\alpha_t$ that minimize $R$ from an additional calibration dataset as a mixed integer linear {complementarity} program, and we show how we can derive  a linear complementarity program that {has the same optimal value}.
    \item On two case studies, we demonstrate that our method produces much smaller valid conformal prediction regions compared to \cite{stankeviciute2021conformal,lindemann2022safe,lindemann2023conformal}, see Figure \ref{fig:ORCATrace}.
\end{itemize}

\subsection{Related Work}\label{sec:related} Conformal prediction was originally proposed by Vovk in \cite{vovk2005algorithmic,shafer2008tutorial} to quantify uncertainty of prediction models. The original method, however, required training a prediction model for each calibration datapoint, which is computationally intractable for learning-enabled predictors. Inductive conformal prediction, also referred to as split conformal prediction, addresses this issue by splitting the calibration data into two sets, one used for training the prediction model and one for applying conformal prediction  \cite{papadopoulos2008inductive}. Split conformal prediction has been extended to provide conditional probabilistic guarantees \cite{vovk2012conditional}, to handle distribution shifts \cite{tibshirani2019conformal,fannjiang2022conformal}, and to allow for quantile regression \cite{romano2019conformalized}.  Further, split conformal prediction has been used to construct probably approximately correct prediction sets for machine learning models \cite{park2020pac,angelopoulos2022conformal}, to perform out-of-distribution detection \cite{Kaur2022,kaur2023codit}, to guarantee safety in autonomous systems \cite{luo2022sample}, and to quantify uncertainty for F1/10 car motion predictions \cite{tumu2023physics}. Additionally, in \cite{stutz2022ICLR} the authors encode the width of the generated prediction sets directly into the loss function of a neural network during training.

However, the aforementioned methods cannot directly be applied to time series data. Multiple works have adapted conformal prediction algorithms for the time series domain, including enbpi \cite{xu2021conformal}, adaptive conformal inference  \cite{gibbs2021adaptive}, fully adaptive conformal inference  \cite{gibbs2022conformal,bhatnagar2023improved}, and copula conformal prediction \cite{sun2022copula,tonkens2023} . However, they either only provide averaged coverage guarantees over long horizons or require a lot of extra calibration data. Regardless, these methods have been used for analyzing financial market data \cite{Wisniewski2020ApplicationOC}, synthesizing safe robot controllers \cite{dixit2022}, and dynamically allocating compute resources \cite{cohen2023guaranteed}. For a more comprehensive overview of the conformal prediction landscape see \cite{angelopoulos2021gentle}.

Our paper is motivated by \cite{stankeviciute2021conformal,lindemann2022safe,lindemann2023conformal}. These works apply inductive conformal prediction in a time series setting, which requires constructing individual prediction regions with confidence $1-\delta/T$ to obtain $1-\delta$ coverage for all time steps. This results in unnecessarily conservative prediction regions {that are impractical for downstream tasks} such as motion planning. We overcome this limitation by considering a parameterized conformal scoring function that is defined over multiple time steps.

%%%%%%%%%%%%%%%%%%%%%%%
%% Background
%%%%%%%%%%%%%%%%%%%%%%%

\section{Background}
\label{sec:background}

In this section, we present background on time series predictors and split conformal prediction for time series.

\subsection{Time Series Predictors}
\label{sec:backgroundTimeSeriesPredictors}

We want to predict  future values $Y_1,\hdots,Y_T$ of a time series for a prediction horizon of $T$ from past observed values $Y_{T_{obs}},\hdots, Y_0$ for an observation length of ${T_{obs}}$. Let  $\mathcal{D}$ denote an unknown distribution over a finite-horizon time series $Y:=(Y_{T_{obs}},\hdots, Y_0,Y_1,\hdots,Y_T)$, i.e., let $$(Y_{T_{obs}},\hdots, Y_0,Y_1,\hdots,Y_T) \sim \mathcal{D}$$ denote a random trajectory drawn from $\mathcal{D}$ where $Y_t \in \mathbb{R}^m$ is the value at time $t$. %We use lowercase letters $y_t$ to denote a realization of the random variable $Y_t$. 
We make no assumptions about the distribution $\mathcal{D}$, but we do assume availability of a calibration dataset in which each trajectory is drawn independently from $\mathcal{D}$. Particularly, define $$ D_{cal}:= \{Y^{(1)},\hdots,Y^{(n)} \}$$
where the element $Y^{(i)}:=(Y_{T_{obs}}^{(i)},\hdots, Y_0^{(i)},Y_1^{(i)},\hdots,Y_T^{(i)})$ is independently drawn from $\mathcal{D}$, i.e., $Y^{(i)}\sim\mathcal{D}$.

A key challenge lies in constructing accurate predictions of future values $Y_1,\hdots,Y_T$ of the time series. We assume here that we are already given a time series predictor $h:\mathbb{R}^{m(T_{obs}+1)}\to \mathbb{R}^{mT}$ that maps ($Y_{T_{obs}},\hdots, Y_0$) to estimates of the next $T$ values ($Y_1,\hdots,Y_T$), denoted as $$(\hat{Y}_1,\hdots,\hat{Y}_T):=h(Y_{T_{obs}},\hdots, Y_0).$$

We make no assumptions about $h$. For example, it can be a long short-term memory network (LSTM) \cite{hochreiter1997long} or a sliding linear predictor with extended Kalman filters \cite{wei2022}. Using the predictor $h$, we can obtain predictions for the calibration data in $D_{cal}$. Specifically, for each $Y^{(i)}\in D_{cal}$, we obtain
\begin{align}\label{eq:predi}
    (\hat{Y}_1^{(i)},\hdots,\hat{Y}_T^{(i)}):=h(Y_{T_{obs}}^{(i)},\hdots, Y_0^{(i)}).
\end{align}

The predictor $h$ may not be exact, and {the inaccuracy of its predictions  $(\hat{Y}_1,\hdots,\hat{Y}_T)$ is unknown.} We will quantify prediction uncertainty using the calibration dataset $D_{cal}$ and split conformal prediction, which we introduce next.

\subsection{Split Conformal Prediction}
\label{sec:intro_cp}

 Conformal prediction was introduced in \cite{vovk2005algorithmic,shafer2008tutorial} to obtain valid prediction regions for complex predictive models such as neural networks without making assumptions on the  distribution of the underlying data. Split conformal prediction is a computationally tractable variant of conformal prediction  \cite{papadopoulos2008inductive} where it is assumed that a calibration dataset is available that has not been use to train the predictor. Let $R^{(0)},\hdots,R^{(k)}$ be $k+1$ exchangeable real-valued random variables.\footnote{Exchangeability is a slightly weaker form of independent and identically distributed (i.i.d.) random variables.} The variable $R^{(i)}$ is usually referred to as the \emph{nonconformity score}. In supervised learning, it is often defined as $R^{(i)}:=\|Z^{(i)}-\mu(X^{(i)})\|$ where the predictor $\mu$ attempts to predict the output  $Z^{(i)}$ based on the input $X^{(i)}$. Naturally, a large nonconformity score indicates a poor predictive model. Our goal is to obtain a prediction region for $R^{(0)}$ based on the calibration data $R^{(1)},\hdots,R^{(k)}$, i.e., the random variable $R^{(0)}$ should be contained within the prediction region  with high probability. 

Formally, given a failure probability $\bar{\delta}\in (0,1)$, we want to construct a valid prediction region $C\in \mathbb{R}$ so that\footnote{More formally, we would have to write $C(R^{(1)},\hdots,R^{(k)})$ as the prediction region $C$ is a function of $R^{(1)},\hdots,R^{(k)}$. For this reason, the probability measure $P$ is defined over the product measure of $R^{(0)},\hdots,R^{(k)}$.}
\begin{align}\label{eq:cpGuaranteeVanilla}
    \text{Prob}(R^{(0)}\le C)\ge 1-\bar{\delta}.
\end{align}

We pick $C:=\text{Quantile}(\{ R^{(1)}, \hdots, R^{(k)}, \infty \},1-\bar{\delta})$ which is  the $(1-\bar{\delta})$th quantile of the empirical distribution of the values $R^{(1)},\hdots,R^{(k)}$ and $\infty$. Equivalently, by assuming that $R^{(1)},\hdots,R^{(k)}$ are sorted in non-decreasing order and by adding $R^{(k+1)}:=\infty$, we can  obtain $C:=R^{(p)}$ where $p:=\lceil (k+1)(1-\bar{\delta})\rceil$ with $\lceil \cdot\rceil$ being the ceiling function, i.e., $C$ is the $p$th smallest nonconformity score. By a quantile argument, see \cite[Lemma 1]{tibshirani2019conformal}, one can prove that this choice of $C$ satisfies~\Cref{eq:cpGuaranteeVanilla}. We remark that $k\ge \lceil (k+1)(1-\bar{\delta})\rceil$ is required to hold to obtain meaningful, i.e., bounded, prediction regions. It is known that the guarantees in \eqref{eq:cpGuaranteeVanilla} are marginal over the randomness in $R^{(0)}, R^{(1)}, \hdots, R^{(k)}$ as opposed to being conditional on $ R^{(1)}, \hdots, R^{(k)}$. In fact, the probability $\text{Prob}(R^{(0)}\le C|R^{(1)}, \hdots, R^{(k)})$ is a random variable that follows a beta distribution centered around $1-\bar{\delta}$ with decreasing variance as $k$ increases, see \cite[Section 3.2]{angelopoulos2021gentle} for details. As a result, larger calibration datasets 
reduce the variance of  conditional coverage.

\begin{remark}\label{rem:1}
Note that  split conformal prediction assumes that the non-conformity scores $R^{(0)},\hdots,R^{(k)}$ are exchangeable. This complicates its use for time series data where $Y_{t+1}$ generally depends on $Y_t$. However, with a calibration dataset $D_{cal}$ of trajectories independently drawn from $\mathcal{D}$, the application of conformal prediction is possible \cite{stankeviciute2021conformal,lindemann2022safe,lindemann2023conformal}. In summary, one defines $T$ non-conformity scores $R^{(i)}_t:=\|\hat{Y}^{(i)}_t-Y^{(i)}_t\|$ for each time $t\in\{1,\hdots,T\}$ and each calibration trajectory $Y^{(i)}\in D_{cal}$; $R^{(i)}_t$ can be interpreted as the $t$-step ahead prediction error. Using conformal prediction, for each time $t$ we can then construct a prediction region $C_t$ so that $\text{Prob}(\|\hat{Y}_t-Y_t\|\le C_t)\ge 1-\bar{\delta}$. To then obtain a prediction region over all time steps, i.e, $\text{Prob}(\|\hat{Y}_t-Y_t\|\le C_t, \; \forall t\in \{1,\hdots,T\})\ge 1-\delta$, we simply set $\bar{\delta}:=\delta/T$. As previously mentioned, this results in conservative prediction regions that are not practical.
\end{remark}

%%%%%%%%%%%%%%%%%%%%%%%
%% Problem Formulation
%%%%%%%%%%%%%%%%%%%%%%%

\section{Problem Formulation and Main Idea}
\label{sec:problemFormulation}

Our goal is to construct valid conformal prediction regions for time series. The main idea is to use a parameterized non-conformity score that considers prediction errors over all $T$ time steps. We consider the non-conformity score
\begin{align}
    R:=\max(\alpha_1 R_p(Y_1, \hat{Y}_1), \hdots, \alpha_T R_p(Y_T,\hat{Y}_T)) \label{eq:generalRfunc}
\end{align}
where $\alpha_1,\hdots,\alpha_T\geq 0$ are parameters and where $R_p$ is the prediction error for an individual time step. For the purpose of this paper, we use the Euclidean distance between the actual and the predicted value of the time series at time $t$ as
\begin{align}
    R_p(Y_t, \hat{Y}_t) := \| Y_t - \hat{Y}_t \| \label{eq:RfuncL2}
\end{align}
We note that $R_p$ can in general be any real-valued function of $Y_t$ and $\hat{Y}_t$. In this paper, however, we set $R_p$ to be the Euclidean distance between $Y_t$ and $\hat{Y}_t$ as in \eqref{eq:RfuncL2}.

Taking the maximum prediction error over all $T$ time steps as in equation \eqref{eq:RfuncL2} will allow us to avoid the potentially conservative prediction regions from \cite{stankeviciute2021conformal,lindemann2022safe,lindemann2023conformal} as we, in our proof, do not need to union bound over the individual prediction errors. However, the choice of the parameters $\alpha_1,\hdots,\alpha_T$ is now important. If we apply conformal prediction as described in \Cref{sec:intro_cp}  to the nonconformity score in \eqref{eq:generalRfunc} with $\alpha_1=\hdots=\alpha_T=1$, this would typically result in large prediction regions for the first time step since prediction errors usually get larger over time, i.e., $R_p(Y_T,\hat{Y}_T)$  in \eqref{eq:generalRfunc} would dominate so that  $R=R_p(Y_T,\hat{Y}_T)$. The $\alpha_1,\hdots,\alpha_T$ parameters serve to weigh the different time steps, and we generally expect $\alpha_t$ for larger times to be smaller than $\alpha_t$ for smaller times. 

Given the parameterized function $R$ as in \Cref{eq:generalRfunc} and a failure threshold of $\delta\in (0,1)$, we are in this paper particularly interested in computing the parameters $\alpha_1,\hdots,\alpha_T$ that minimize the constant $C$ that satisfies 
    \begin{align*}
        \text{Prob}(R\le C)\ge 1-\delta.
    \end{align*}

For given parameters $\alpha_1,\hdots,\alpha_T$, note that the constant $C$ can be found by applying conformal prediction to the nonconformity score $R$ using the calibration set $D_{cal}$. When we have found this constant $C$, we  know that 
\begin{align*}
    \text{Prob}(\|Y_t-\hat{Y}_t\|\le C/\alpha_t, \;\; \forall t\in\{1,\hdots,T\})\ge 1-\delta.
\end{align*}

To solve the aforementioned problem, we split the calibration dataset $D_{cal}$ into two sets $D_{cal,1}$ and $D_{cal,2}$. We will use the first calibration dataset $D_{cal,1}$ to compute the values of $\alpha_1,\hdots,\alpha_T$ that give the smallest prediction region $C$ for $R$ (see Section \ref{sec:approach}). In a next step, we use these parameters that now fully define the nonconformity score $R$ along with the second calibration dataset $D_{cal,2}$ to construct valid prediction regions via  conformal prediction (see Section \ref{sec:second_approach}).

%%%%%%%%%%%%%%%%%%%%%%%
%% Approach
%%%%%%%%%%%%%%%%%%%%%%%

\section{Nonconformity Scores for Time-Series via Linear Programming}
\label{sec:approach}

We first have to set up some notation. Let us denote the elements of the first calibration dataset as $D_{cal,1}:=\{Y^{(1)},\hdots,Y^{(n_1)}\}$ where $n_1>0$ is the size of the first calibration dataset. Recall that each element $Y^{(i)}$ of $D_{cal,1}$ is defined as $Y^{(i)}=(Y_{T_{obs}}^{(i)},\hdots,Y_0^{(i)},Y^{(i)}_1,\hdots,Y^{(i)}_T)$. Using the trajectory predictor $h$, we obtain predictions  $(\hat{Y}_1^{(i)},\hdots,\hat{Y}_T^{(i)})$ for $(Y^{(i)}_1,\hdots,Y^{(i)}_T)$ as per equation \eqref{eq:predi}.

We then compute the prediction error $R_{t}^{(i)}$ according to equation \eqref{eq:RfuncL2} for each calibration trajectory and for each time step, i.e., for each $t\in\{1,\hdots,T\}$ and $i\in\{1,\hdots,n_1\}$. More formally,  we compute for each calibration trajectory
\begin{align}\label{eq:non_pred_err}
    R_t^{(i)}:=R_p(Y_t^{(i)},\hat{Y}_t^{(i)}).
\end{align}

We can now cast the problem of finding the parameters $\alpha_1,\hdots,\alpha_T$  as the following optimization problem.

\begin{subequations}\label{eq:highLevelOptimization_}
\begin{align} 
    \min_{ \alpha_1,\hdots,\alpha_T} &\quad \text{Quantile}(\{ R^{(1)}, \hdots, R^{(n_1)} \},1-\delta) \label{eq:highLevelOptimization} \\
    \text{s.t.} & \quad R^{(i)} = \text{max}(\alpha_1 R^{(i)}_1, \hdots, \alpha_T R^{(i)}_T), i=1,\hdots,n_1 \label{eq:highLevelOptimizationb}  \\
    & \quad \sum_{j=1}^{T} \alpha_j = 1 \label{eq:highLevelOptimizationc}  \\
    & \quad \alpha_j \geq 0, j=1,\hdots,T \label{eq:highLevelOptimizationd}
\end{align}
\end{subequations}
where $\text{Quantile}(\{ R^{(1)}, \hdots, R^{(n_1)} \},1-\delta)$ denotes the empirical $1-\delta$ quantile over  $ R^{(1)}, \hdots, R^{(n_1)}$. We  removed the value $\infty$ from \text{Quantile}, as required in Section \ref{sec:intro_cp}. However, this only affects the computation of $\alpha_1,\hdots,\alpha_T$, but not the correctness of our method as shown later in Section \ref{sec:second_approach}. 
Note also that one could alternatively use $\text{Quantile}(\{ R^{(1)},\hdots,R^{(n_1)} \},(1+1/{n_1})(1-\delta))$ as it holds that $\text{Quantile}(\{ R^{(1)},\hdots,R^{(n_1)},\infty\},1-\delta) = \text{Quantile}(\{ R^{(1)},\hdots,R^{(n_1)} \},(1+1/{n_1})(1-\delta))$ if $(1+1/{n_1})(1-\delta)\in (0,1)$. 
Finally, note that Constraint \eqref{eq:highLevelOptimizationd} constrains $\alpha_1,\hdots,\alpha_T$ to be non-negative. However, one of the $\alpha_t$ could be zero which would be a problem. \Cref{eq:highLevelOptimizationc}  normalizes $\alpha_1,\hdots,\alpha_T$ so that this does not happen. Particularly, we show next that the optimal values  $\alpha_1,\hdots,\alpha_T$ of  \eqref{eq:highLevelOptimization_}  will be positive as long as all $R_{t}^{(i)}$ are non-zero. %\textcolor{orange}{It's not clear what you mean by non-conformity score! Use the proper notation, I suppose you mean $R_t^{(i)}$.}
\begin{theorem}\label{thm:alphasNonZero}
    Assume that $R_{t}^{(i)}>0$ for all $t\in\{1,\hdots,T\}$ and for all $i\in\{1,\hdots,n_1\}$. Then, the optimal values $\alpha_1,\hdots,\alpha_T$ of \eqref{eq:highLevelOptimization_} are positive, i.e.,  $\alpha_1^*,\hdots,\alpha_T^* > 0$.
    %\begin{proof}
    %    See \cref{app:alphasNonzeroProof}.
    %\end{proof}
\end{theorem}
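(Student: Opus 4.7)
The plan is to argue by contradiction. Suppose $(\alpha_1^*, \ldots, \alpha_T^*)$ is optimal but $\alpha_j^* = 0$ for some index $j$. I will exhibit a feasible perturbation that strictly decreases the objective, contradicting optimality.

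First, I set $R^{(i)*} := \max(\alpha_1^* R_1^{(i)}, \ldots, \alpha_T^* R_T^{(i)}) = \max_{k\ne j} \alpha_k^* R_k^{(i)}$ for each $i$. Since $\sum_k \alpha_k^* = 1$ and $\alpha_j^* = 0$, at least one $\alpha_k^*$ with $k\ne j$ is positive, and combined with the hypothesis $R_k^{(i)} > 0$ this gives $R^{(i)*} > 0$ for every $i$. Next, for a parameter $\epsilon \in (0,1)$ to be fixed below, define the perturbed vector
\begin{equation*}
\alpha'_k := (1-\epsilon)\alpha_k^* \ \text{for } k\ne j, \qquad \alpha'_j := \epsilon.
\end{equation*}
Then $\alpha'_k \ge 0$ and $\sum_k \alpha'_k = (1-\epsilon)\sum_{k\ne j}\alpha_k^* + \epsilon = (1-\epsilon) + \epsilon = 1$, so $\alpha'$ is feasible for \eqref{eq:highLevelOptimization_}.

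The key computation is that for each $i$,
\begin{equation*}
R'^{(i)} = \max\bigl((1-\epsilon)\max_{k\ne j}\alpha_k^* R_k^{(i)},\ \epsilon R_j^{(i)}\bigr) = \max\bigl((1-\epsilon) R^{(i)*},\ \epsilon R_j^{(i)}\bigr).
\end{equation*}
Choosing $\epsilon > 0$ small enough that $\epsilon R_j^{(i)} < (1-\epsilon) R^{(i)*}$ for all $i$ — feasible because $R^{(i)*}>0$, so the bound $\epsilon < \min_i R^{(i)*}/(R^{(i)*}+R_j^{(i)})$ is strictly positive — I get $R'^{(i)} = (1-\epsilon) R^{(i)*}$ for every $i$. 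Hence every nonconformity score is shrunk by the common factor $(1-\epsilon) < 1$, and the $(1-\delta)$-quantile, which is simply an order statistic, is likewise shrunk by $(1-\epsilon)$. Since the original quantile is at least the smallest $R^{(i)*}$, which is positive, it is strictly decreased, contradicting the optimality of $\alpha^*$.

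I expect the main subtlety to be the choice of perturbation rather than any technical estimate: one must shrink all other components proportionally (not just one) to get a uniform contraction factor on all the $R^{(i)}$, and the strict positivity assumption on the $R_t^{(i)}$ is precisely what guarantees both that $R^{(i)*} > 0$ and that the admissible $\epsilon$-window is nonempty. Once the perturbation is set up, the conclusion follows immediately because scaling all data by $(1-\epsilon)$ scales every order statistic by the same factor.
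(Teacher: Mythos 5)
Your proof is correct and follows essentially the same route as the paper's: argue by contradiction and reallocate normalization budget to the zero coordinate $\alpha_j$, using $R_t^{(i)}>0$ to guarantee the maxima $R^{(i)*}$ stay positive. In fact your version is tighter than the paper's, which only asserts via continuity and monotonicity of the quantile that the objective can be decreased; your explicit perturbation $\alpha'_k=(1-\epsilon)\alpha_k^*$, $\alpha'_j=\epsilon$ with $\epsilon<\min_i R^{(i)*}/(R^{(i)*}+R_j^{(i)})$ yields a uniform contraction of all scores by $(1-\epsilon)$ and hence, by positive homogeneity of the empirical quantile, a \emph{strict} decrease, which is the step the paper leaves implicit.
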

The proof of this theorem, along with subsequent theorems, can be found in the Appendix.

Note that the intuition behind this result is that \Cref{eq:highLevelOptimizationc} assigns a budget for the parameters $\alpha_1,\hdots,\alpha_T$ that need to sum to one. If  $\alpha_t=0$, one can increase $\alpha_t$ to allow another $\alpha_\tau$ with $\tau\neq t$ to decrease which will result in lowering the cost in equation \eqref{eq:highLevelOptimization}. We note that the assumption of having non-zero values of $R_{t}^{(i)}$ is not limiting as in practice prediction errors are never zero. 

\begin{remark}
    We note that the optimization problem in equation \eqref{eq:highLevelOptimization_} is always feasible. This follows since any set of parameters $\alpha_1,\hdots,\alpha_T$ that satisfy \Cref{eq:highLevelOptimizationc,eq:highLevelOptimizationd} constitute a feasible solution to \eqref{eq:highLevelOptimization_}.
\end{remark}

To solve the optimization problem in equation \eqref{eq:highLevelOptimization_}, we transform it into a linear complementarity program \cite{cottle2009linear} that has the same optimal value as \eqref{eq:highLevelOptimization_}. To do so, we first reformulate the $\text{max}()$ operator in equation \eqref{eq:highLevelOptimizationb} as a set of mixed integer linear program (MILP) constraints, which we then {transform} into a set of linear program constraints. We show that this transformation preserves the optimal value of \eqref{eq:highLevelOptimization_} (Section \ref{sec:maxFormulation}). Then, we reformulate the Quantile function from equation \eqref{eq:highLevelOptimization} into a linear program (LP) that we  reformulate using its KKT conditions (Section \ref{sec:quantileKKT}), resulting in a linear complementarity program.

\subsection{Reformulating the Max Operator}
\label{sec:maxFormulation}

We first use ideas from \cite{bemporad1999,raman2014model} for encoding the max operator in \eqref{eq:highLevelOptimizationb} as an MILP. Therefore, we introduce the binary variables $b_t^{(i)}\in \{ 0,1\}$ for each time $t\in\{1,\hdots,T\}$ and calibration trajectory $i\in\{1,\hdots,n_1\}$. The equality constraint $R^{(i)} = \text{max}(\alpha_1 R_1^{(i)}, \hdots, \alpha_T R_T^{(i)})$ in equation \eqref{eq:highLevelOptimizationb} is then equivalent to the following MILP constraints:
\begin{subequations} \label{eq:maxConstraints}
\begin{align}
    & \quad R^{(i)} \geq \alpha_t R_t^{(i)}, \; t=1,\hdots,T \\
    & \quad R^{(i)} \leq \alpha_t R_t^{(i)} + (1-b_t^{(i)}) M, \; t=1,\hdots,T  \\
    & \quad \sum_{t=1}^{T} b_t^{(i)} = 1  \\
    & \quad b_t^{(i)} \in \{ 0,1\}, \; t=1,\hdots,T 
\end{align}
\end{subequations}
where $M>0$ is a sufficiently large and positive constant.\footnote{See \cite{bemporad1999}. In this case, we need $M$ to satisfy  $M\geq \max_{i\in\{1,\hdots,n_1\},t\in\{1,\hdots,T\}}R_t^{(i)}$.}

By replacing the max operator in \Cref{eq:highLevelOptimizationb} with the MILP encoding in \Cref{eq:maxConstraints}, we arrive at the following optimization problem: 
\begin{subequations}\label{eq:highLevelOptimizationMaxMILP}
\begin{align}
    \min_{\alpha_t,b_{t}^{(i)},R^{(i)}} &\quad \text{Quantile}(\{ R^{(1)}, \hdots, R^{(n_1)} \},1-\delta) \label{eq:highLevelOptimizationMaxMILPa} \\
    %\min_{ \alpha_1,\hdots,\alpha_T} &\quad \text{Quantile}(\{ R^{(1)}, \hdots, R^{(n_1)} \},1-\delta) \label{eq:highLevelOptimizationMaxMILPa} \\
    \text{s.t.} & \quad R^{(i)} \geq \alpha_t R_t^{(i)}, \; i=1,\hdots,n_1 \land t=1,\hdots,T \label{eq:highLevelOptimizationMaxMILPb}  \\
    %& \quad R^{(i)} \leq \alpha_t R_t^{(i)} + (1-b_t^{(i)}) M, \; t=1,\hdots,T \label{eq:highLevelOptimizationMaxMILPc}  \\
    & \quad R^{(i)} \leq \alpha_t R_t^{(i)} + (1-b_t^{(i)}) M, \; i=1,\hdots,n_1 \label{eq:highLevelOptimizationMaxMILPc}  \\ 
    & \quad \quad \quad \quad \quad \quad \quad \quad \quad \quad \quad \quad \; \; \land t=1,\hdots,T \nonumber  \\
    & \quad \sum_{t=1}^{T} b_t^{(i)} = 1 \label{eq:highLevelOptimizationMaxMILPd}  \\
    & \quad b_t^{(i)} \in \{ 0,1\}, \; t=1,\hdots,T \label{eq:highLevelOptimizationMaxMILPe} \\
    & \quad \eqref{eq:highLevelOptimizationc}, \; \eqref{eq:highLevelOptimizationd} \nonumber  
\end{align}
\end{subequations}
The constraints in \eqref{eq:highLevelOptimizationMaxMILP} are linear in their parameters, i.e., in $\alpha_t$, $R^{(i)}$, and $b_t^{(i)}$. The cost function in \eqref{eq:highLevelOptimizationMaxMILPa} {can be reformulated as a linear complementarity program (LCP)}, as shown in the next section. Thus, \eqref{eq:highLevelOptimizationMaxMILP} is an MILCP since the $b_t^{(i)}$ variables are binary due to equation \eqref{eq:highLevelOptimizationMaxMILPe}. Commercial solvers can solve MILCPs, but they present scalability issues for larger time horizons $T$ and calibration data $n_1$. Therefore, we present an LCP reformulation of \eqref{eq:highLevelOptimizationMaxMILP} which will achieve the same optimality in terms of the constraint function \eqref{eq:highLevelOptimizationMaxMILPa} {while providing better scalability}.

\subsubsection{Linear {complementarity relaxation} of the MILCP in \eqref{eq:highLevelOptimizationMaxMILP}}
\label{sec:relaxMax}

We first remove the upper bound on $R^{(i)}$ in equation \eqref{eq:highLevelOptimizationMaxMILPc} and the binary variables $b_t^{(i)}$ in equations \eqref{eq:highLevelOptimizationMaxMILPd} and \eqref{eq:highLevelOptimizationMaxMILPe} from the MILCP in \eqref{eq:highLevelOptimizationMaxMILP}. This results in the optimization problem: 
\begin{subequations} \label{eq:problemWithoutBinaryVars}
\begin{align} 
    \min_{\alpha_t,R^{(i)}} &\quad \text{Quantile}(\{ R^{(1)}, \hdots, R^{(n_1)} \},1-\delta) \label{eq:problemWithoutBinaryVarsa} \\
    \text{s.t.} & \quad R^{(i)} \geq \alpha_t R_t^{(i)}, \; i=1,\hdots,n_1 \land t=1,\hdots,T \label{eq:problemWithoutBinaryVarsb}  \\
    & \quad \eqref{eq:highLevelOptimizationc}, \; \eqref{eq:highLevelOptimizationd} \nonumber 
\end{align}
\end{subequations}
Note that the above optimization problem is now a linear {complementarity} program due to the removal of the binary variables $b_t^{(i)}$. While the LCP in \eqref{eq:problemWithoutBinaryVars} is clearly computationally more tractable than the MILCP in \eqref{eq:highLevelOptimizationMaxMILP}, one may ask in what way the optimal solutions of these two optimization problems are related. We next prove that the optimal values (in terms of the cost functions in \eqref{eq:highLevelOptimizationMaxMILPa} and \eqref{eq:problemWithoutBinaryVarsa}) of the two optimization problems \eqref{eq:highLevelOptimizationMaxMILP} and \eqref{eq:problemWithoutBinaryVars} are the same.

\begin{theorem}\label{thm:1}
    Let $\delta \in (0,1)$ be a failure probability and $R_t^{(i)}$ be the prediction errors as in \eqref{eq:non_pred_err} for times $t\in\{1,\hdots,T\}$ and calibration trajectories $i\in\{1,\hdots,n_1\}$ from $D_{cal,1}$. Then, the optimal {cost} values of the optimization problems \eqref{eq:highLevelOptimizationMaxMILP} and \eqref{eq:problemWithoutBinaryVars} are equivalent.
\end{theorem}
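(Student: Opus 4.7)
The plan is to prove the equivalence of optimal costs by sandwiching: show that the LCP optimum is no greater than the MILCP optimum (easy direction, since the LCP is a relaxation), and then construct, from any LCP-optimal solution, a feasible MILCP solution with cost no greater than the LCP optimum (harder direction).

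For the easy direction, I would observe that if $(\alpha_t, R^{(i)}, b_t^{(i)})$ is feasible for the MILCP in \eqref{eq:highLevelOptimizationMaxMILP}, then $(\alpha_t, R^{(i)})$ is automatically feasible for the LCP in \eqref{eq:problemWithoutBinaryVars}, since the constraints of the LCP are a subset of the MILCP's. The cost functions \eqref{eq:highLevelOptimizationMaxMILPa} and \eqref{eq:problemWithoutBinaryVarsa} are identical and depend only on the $R^{(i)}$'s, so this immediately yields $\mathrm{opt}\eqref{eq:problemWithoutBinaryVars} \le \mathrm{opt}\eqref{eq:highLevelOptimizationMaxMILP}$.

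For the reverse direction, let $(\alpha_t^\star, R^{(i)\star})$ be optimal for the LCP. I would then tighten the $R^{(i)\star}$ down to their min-feasible values by defining
\begin{equation*}
    \tilde R^{(i)} := \max_{t\in\{1,\hdots,T\}} \alpha_t^\star R_t^{(i)}, \quad i=1,\hdots,n_1.
\end{equation*}
By the LCP constraint \eqref{eq:problemWithoutBinaryVarsb}, $\tilde R^{(i)} \le R^{(i)\star}$ for every $i$. The key auxiliary fact is that the empirical quantile function is coordinate-wise non-decreasing in its inputs, so $\text{Quantile}(\{\tilde R^{(i)}\},1-\delta) \le \text{Quantile}(\{R^{(i)\star}\},1-\delta)$; I would state and use this monotonicity explicitly (it follows directly from the definition of the empirical $(1-\delta)$-quantile via the non-decreasing sorted order). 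Next, define $b_t^{(i)} = 1$ for one index $t^\star_i \in \arg\max_t \alpha_t^\star R_t^{(i)}$ and $b_t^{(i)} = 0$ otherwise; together with $(\alpha_t^\star,\tilde R^{(i)})$, verify that constraints \eqref{eq:highLevelOptimizationMaxMILPb}--\eqref{eq:highLevelOptimizationMaxMILPe} and \eqref{eq:highLevelOptimizationc}--\eqref{eq:highLevelOptimizationd} all hold: the lower bound holds by construction of $\tilde R^{(i)}$, the upper bound $\tilde R^{(i)} \le \alpha_{t^\star_i}^\star R_{t^\star_i}^{(i)} + 0\cdot M$ holds with equality at $t=t^\star_i$ and is slack (by $M$ sufficiently large) for the other $t$, and the binary-sum constraint is immediate.

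This exhibits a MILCP-feasible point whose cost equals $\text{Quantile}(\{\tilde R^{(i)}\},1-\delta)$, which is at most the LCP optimum. Therefore $\mathrm{opt}\eqref{eq:highLevelOptimizationMaxMILP} \le \mathrm{opt}\eqref{eq:problemWithoutBinaryVars}$, and combining with the easy direction yields equality. The main obstacle will be writing down the quantile-monotonicity step cleanly — not mathematically deep, but it requires care since the permutation that sorts $\{\tilde R^{(i)}\}$ may differ from the one that sorts $\{R^{(i)\star}\}$; the cleanest argument is to note that the $p$-th order statistic is non-decreasing in each argument because lowering any single coordinate can only move it left in the sorted order, never right.
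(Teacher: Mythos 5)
Your proposal is correct and follows essentially the same route as the paper's proof: the relaxation direction gives one inequality, and the other follows by tightening an LCP-optimal solution to $\tilde R^{(i)}=\max_t \alpha_t^\star R_t^{(i)}$, choosing binaries at an argmax index, and invoking monotonicity of the empirical quantile. Your treatment is in fact slightly more explicit than the paper's (which merely asserts the existence of suitable $\bar b_t^{(i)}$), but the argument is the same.
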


%\begin{proof}
%See \cref{app:thm1}.
%\end{proof}

Note that Theorem \ref{thm:1} guarantees that the optimal values of the optimization problems in \eqref{eq:highLevelOptimizationMaxMILP} and \eqref{eq:problemWithoutBinaryVars} are equivalent. However, we do not guarantee that the optimal parameters $\alpha_1^*,\hdots,\alpha_T^*$ obtained by solving \eqref{eq:highLevelOptimizationMaxMILP} and \eqref{eq:problemWithoutBinaryVars} will be the same as the optimization problem in \eqref{eq:highLevelOptimizationMaxMILP} may have multiple optimal solutions $\alpha_1^*,\hdots,\alpha_T^*$. Nonetheless, the linear reformulation in \eqref{eq:problemWithoutBinaryVars} is computationally tractable and cost optimal. In our experiments in \Cref{sec:alphaDiffs}, we also show that we obtain parameters from \eqref{eq:highLevelOptimizationMaxMILP} and \eqref{eq:problemWithoutBinaryVars} that are almost equivalent.

\subsection{Reformulating the Quantile Function}
\label{sec:quantileKKT}

We note that the Quantile function in \eqref{eq:highLevelOptimization} can be written as the following linear program \cite{Koenker1978}:
\begin{subequations}\label{eq:quantileAsLP}
\begin{align} 
    \text{Quantile}&(\{ R^{(1)}, \hdots, R^{(n_1)} \},1-\delta) \nonumber \\ 
    = \text{argmin}_{q} &\quad \sum_{i=1}^{n_1} \left( (1-\delta) e_i^+ + \delta e_i^- \right) \label{eq:quantileAsLPa}  \\
    \text{s.t.} & \quad e_i^+-e_i^- = R^{(i)} - q, \; i=1,\hdots,n_1 \label{eq:quantileAsLPb}  \\
    & \quad e_i^+,e_i^- \geq 0, \; i=1,\hdots,n_1 \label{eq:quantileAsLPc}
\end{align}
\end{subequations}
where the optimal solution $q$ of \eqref{eq:quantileAsLP} is equivalent to $\text{Quantile}(\{ R^{(1)}, \hdots, R^{(n_1)} \},1-\delta)$. However, we cannot directly replace $\text{Quantile}(\{ R^{(1)}, \hdots, R^{(n_1)} \},1-\delta)$ in the optimization problem in \eqref{eq:highLevelOptimization} with the linear program from equation \eqref{eq:quantileAsLP} as that would result in a problem with an objective function that is itself an optimization problem. To address this issue, we replace the optimization problem in \eqref{eq:quantileAsLP} with its KKT conditions, which we provide in equation \eqref{eq:KKTconds} in Appendix \ref{app:KKTConds} due to space limitations.

The KKT conditions form an LCP because every constraint is either linear (\eqref{eq:KKTcondsa}-\eqref{eq:KKTcondsc} and \eqref{eq:KKTcondsf}-\eqref{eq:KKTcondsj}) or an equality constraint stating the multiple of two positive variables equals $0$ (\eqref{eq:KKTcondsd} and \eqref{eq:KKTcondse}).\footnote{These constraints ensure that one of the variables equals $0$.} Despite the slight non-linearity, LCPs can be efficiently solved with existing tools \cite{yu2019solving}. Because feasible linear optimization problems have zero duality gap \cite{bradley1977applied}, any solution to the KKT conditions in \eqref{eq:KKTconds} is also an optimal solution to Problem \eqref{eq:quantileAsLP}. %Thus we do not lose anything by replacing \Cref{eq:quantileAsLP} with it's KKT conditions. 

\subsection{Summarizing the Optimization Programs}
\label{sec:finalOptimizationProblem}

Based on the results from the previous subsections, we can formulate two optimization problems that solve \eqref{eq:highLevelOptimization_}.

First, we can exactly solve \eqref{eq:highLevelOptimization_} by replacing the max operator in \eqref{eq:highLevelOptimizationb} with the MILP in \eqref{eq:maxConstraints} and the quantile function in \eqref{eq:highLevelOptimization} with its KKT conditions in \eqref{eq:KKTconds}. The following result follows by construction (for sufficiently large $M$).
\begin{proposition}
The MILCP
\begin{subequations}\label{eq:originalProblemMILP}
\begin{align}
    \min_{q,\alpha_t,R^{(i)},b_t^{(i)}} &\quad q \\
    \text{s.t.} & \quad \eqref{eq:highLevelOptimizationMaxMILPb}, \; \eqref{eq:highLevelOptimizationMaxMILPc}, \; \eqref{eq:highLevelOptimizationMaxMILPd}, \; \eqref{eq:highLevelOptimizationMaxMILPe} \nonumber \\
    & \quad \eqref{eq:KKTcondsa}, \; \eqref{eq:KKTcondsb}, \; \eqref{eq:KKTcondsc}, \; \eqref{eq:KKTcondsd}, \; \eqref{eq:KKTcondse} \nonumber \\ 
    & \quad \eqref{eq:KKTcondsf}, \; \eqref{eq:KKTcondsg}, \; \eqref{eq:KKTcondsh}, \; \eqref{eq:KKTcondsi}, \; \eqref{eq:KKTcondsj}  \nonumber \\
    & \quad \eqref{eq:highLevelOptimizationc}, \; \eqref{eq:highLevelOptimizationd} \nonumber  
\end{align}
is equivalent to the optimization problem in \eqref{eq:highLevelOptimization_}.
\end{subequations}
\end{proposition}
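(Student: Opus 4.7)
The plan is to establish equivalence by verifying, piece by piece, that each of the two replacements applied to problem \eqref{eq:highLevelOptimization_} preserves both feasibility and the objective value, and then composing them. Since the statement is labeled as holding ``by construction'', the intended argument is essentially a bookkeeping check, but it is worth being careful about exactly what ``equivalent'' means here: the two problems share the same optimal objective value, and there is a bijection between optimal solutions (modulo the auxiliary variables $b_t^{(i)}, q, e_i^+, e_i^-$, and the KKT multipliers, which are uniquely determined or freely chosen given the $\alpha_t$'s and $R^{(i)}$'s).

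First I would handle the max-operator reformulation. Fix any feasible $\alpha_1,\hdots,\alpha_T$ of \eqref{eq:highLevelOptimization_} (i.e., satisfying \eqref{eq:highLevelOptimizationc}--\eqref{eq:highLevelOptimizationd}), and let $R^{(i)} = \max_t \alpha_t R_t^{(i)}$. Choosing $M \geq \max_{i,t} R_t^{(i)}$ (which is valid since $\alpha_t \in [0,1]$ by \eqref{eq:highLevelOptimizationc}--\eqref{eq:highLevelOptimizationd}, so $\alpha_t R_t^{(i)} \leq R_t^{(i)}$), and setting $b_{t^*}^{(i)} = 1$ for some index $t^*$ achieving the max and $b_t^{(i)} = 0$ otherwise, I would verify that constraints \eqref{eq:highLevelOptimizationMaxMILPb}--\eqref{eq:highLevelOptimizationMaxMILPe} are satisfied. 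Conversely, for any feasible point of the MILP encoding, constraint \eqref{eq:highLevelOptimizationMaxMILPb} forces $R^{(i)} \geq \max_t \alpha_t R_t^{(i)}$, while \eqref{eq:highLevelOptimizationMaxMILPc} together with \eqref{eq:highLevelOptimizationMaxMILPd} forces $R^{(i)} \leq \alpha_{t^*} R_{t^*}^{(i)}$ for the unique $t^*$ with $b_{t^*}^{(i)}=1$; combined, this gives $R^{(i)} = \max_t \alpha_t R_t^{(i)}$. This is the standard big-M correctness argument from \cite{bemporad1999,raman2014model}.

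Next, I would address the quantile reformulation. Given the values $R^{(1)},\hdots,R^{(n_1)}$ (which depend on the $\alpha_t$'s), the cost $\text{Quantile}(\{R^{(1)},\hdots,R^{(n_1)}\}, 1-\delta)$ is expressed as the optimal value of the pinball-loss LP \eqref{eq:quantileAsLP}, as established in \cite{Koenker1978}. Because this LP is feasible and has finite optimum, strong duality holds and the KKT conditions are both necessary and sufficient for global optimality. Hence every feasible point of \eqref{eq:originalProblemMILP} yields a $q$ that equals the $(1-\delta)$-quantile of $\{R^{(i)}\}$, and conversely any quantile can be realized by completing with the appropriate $e_i^+, e_i^-$ and Lagrange multipliers. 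Minimizing $q$ subject to these KKT conditions (over all feasible $\alpha_t, R^{(i)}$) is therefore the same as minimizing the quantile expression in \eqref{eq:highLevelOptimization}.

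Finally, I would combine the two pieces: a tuple $(\alpha_1,\hdots,\alpha_T)$ is feasible in \eqref{eq:highLevelOptimization_} with objective $\gamma$ if and only if it can be extended, by the constructions above, to a feasible point of \eqref{eq:originalProblemMILP} with $q = \gamma$. Taking the minimum on both sides gives equivalence of the two problems in both optimal value and (projection onto) optimal parameters. The only mildly delicate step is the choice of $M$: I need to argue that the specific bound $M \geq \max_{i,t} R_t^{(i)}$ does not cut off any optimal solution, which follows because $\alpha_t \leq 1$ under \eqref{eq:highLevelOptimizationc}--\eqref{eq:highLevelOptimizationd} so $R^{(i)} \leq \max_t R_t^{(i)} \leq M$ at every feasible point. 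The rest is purely mechanical.
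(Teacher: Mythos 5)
Your proposal is correct and follows essentially the same route the paper intends: the paper dispatches this proposition ``by construction,'' relying on the big-$M$ correctness of the max encoding in \eqref{eq:maxConstraints} (cf.\ the footnote requiring $M\geq \max_{i,t}R_t^{(i)}$) and on zero duality gap making the KKT system \eqref{eq:KKTconds} equivalent to optimality of the quantile LP \eqref{eq:quantileAsLP}, which are exactly the two pieces you verify and compose. Your explicit check that $\alpha_t\le 1$ ensures the stated $M$ suffices is a detail the paper leaves implicit, but it is the same argument.
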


Second, by further removing the upper bound from the max operator as described in \Cref{sec:relaxMax}, we can solve a linear complementarity program that has the same optimal {cost} value as \eqref{eq:highLevelOptimization_}. This result follows from Theorem \ref{thm:2}.
\begin{proposition}
The LCP 
\begin{subequations}\label{eq:relaxedProblemLP}
\begin{align}
    \min_{q,\alpha_t,R^{(i)}} &\quad q \\
    \text{s.t.} & \quad \eqref{eq:highLevelOptimizationMaxMILPb} \nonumber \\
    & \quad \eqref{eq:KKTcondsa}, \; \eqref{eq:KKTcondsb}, \; \eqref{eq:KKTcondsc}, \; \eqref{eq:KKTcondsd}, \; \eqref{eq:KKTcondse} \nonumber \\ 
    & \quad \eqref{eq:KKTcondsf}, \; \eqref{eq:KKTcondsg}, \; \eqref{eq:KKTcondsh}, \; \eqref{eq:KKTcondsi}, \; \eqref{eq:KKTcondsj}  \nonumber \\
    & \quad \eqref{eq:highLevelOptimizationc}, \; \eqref{eq:highLevelOptimizationd} \nonumber  
\end{align}
\end{subequations}
has the same optimal value as the optimization problem  \eqref{eq:highLevelOptimization_}.
\end{proposition}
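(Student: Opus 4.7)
The plan is to derive this proposition as an immediate consequence of Theorem~\ref{thm:1} together with the preceding Proposition. Since the preceding Proposition already certifies that the MILCP \eqref{eq:originalProblemMILP} has the same optimal cost as \eqref{eq:highLevelOptimization_}, it is enough to argue that \eqref{eq:relaxedProblemLP} has the same optimal cost as \eqref{eq:originalProblemMILP}.

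First I would point out that the two programs \eqref{eq:originalProblemMILP} and \eqref{eq:relaxedProblemLP} agree on every constraint except the max encoding: both carry the identical KKT block \eqref{eq:KKTcondsa}--\eqref{eq:KKTcondsj} for the quantile, the normalization constraints \eqref{eq:highLevelOptimizationc}--\eqref{eq:highLevelOptimizationd}, and the lower bound \eqref{eq:highLevelOptimizationMaxMILPb}; only \eqref{eq:originalProblemMILP} additionally imposes the upper bounds \eqref{eq:highLevelOptimizationMaxMILPc} and the binary machinery \eqref{eq:highLevelOptimizationMaxMILPd}--\eqref{eq:highLevelOptimizationMaxMILPe}. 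Hence the two programs differ only in the way the relation $R^{(i)} = \max_t \alpha_t R_t^{(i)}$ is encoded versus the relaxed encoding that keeps only $R^{(i)} \ge \alpha_t R_t^{(i)}$.

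Next I would invoke Theorem~\ref{thm:1}, which already establishes that dropping the upper bounds and binary variables in the max encoding preserves the optimal cost of the quantile minimization, i.e., the cost of \eqref{eq:highLevelOptimizationMaxMILP} equals that of \eqref{eq:problemWithoutBinaryVars}. Because the KKT reformulation of the inner LP \eqref{eq:quantileAsLP} exactly computes the quantile (linear programs have zero duality gap, as noted in Section~\ref{sec:quantileKKT}), applying this reformulation to both \eqref{eq:highLevelOptimizationMaxMILP} and \eqref{eq:problemWithoutBinaryVars} produces exactly \eqref{eq:originalProblemMILP} and \eqref{eq:relaxedProblemLP} without changing their optimal costs. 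Chaining the resulting equalities yields the proposition.

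The subtlety I expect to spend the most care on is confirming that the KKT block interacts with the rest of the program only through the $R^{(i)}$ variables via the linear relations \eqref{eq:KKTcondsb}, so that substituting Quantile by its KKT conditions is genuinely \emph{separable} from the max-operator modifications handled by Theorem~\ref{thm:1}. Once this separability is clear, no further technical work is required, and the proposition follows by assembling two already-proved facts.
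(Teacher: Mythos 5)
Your proposal is correct and takes essentially the same route as the paper: the paper's own justification is simply that the result follows from Theorem~\ref{thm:1} (the text's citation of Theorem~\ref{thm:2} there is evidently a typo) together with the by-construction exactness of the KKT substitution for the quantile, which is exactly the chain you make explicit via the preceding Proposition. No gap.
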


\section{Conformal Prediction Regions for Time-Series}
\label{sec:second_approach}

Finally, after obtaining the parameters $\alpha_1,\hdots,\alpha_T$ from the first calibration dataset $D_{cal,1}$ by solving \eqref{eq:originalProblemMILP} or \eqref{eq:relaxedProblemLP}, we can apply conformal prediction with the non-conformity score $R$ as defined in \eqref{eq:generalRfunc} to obtain prediction regions for time series. We do so by following  \Cref{sec:intro_cp} and using the second calibration dataset $D_{cal,2}$.

More formally, let us denote the elements of the second calibration dataset as $D_{cal,2}:=\{Y^{(n_1+1)},\hdots,Y^{(n)}\}$ where $n-n_1>0$ is the size of the second calibration dataset.\footnote{Recall the $n$ is the size of $D_{cal}$ and $n_1$ is the size of $D_{cal,1}$.} For each calibration trajectory $i\in\{n_1+1,\hdots,n\}$, we compute the nonconformity score as
\begin{align}\label{eq:cal_R_}
    R^{(i)}:=\max\left(\alpha_1 R_p(Y_1^{(i)}, \hat{Y}_1^{(i)}), \hdots, \alpha_T R_p(Y_T^{(i)},\hat{Y}_T^{(i)})\right).
\end{align}
We are now ready to state the final result of our paper in which we obtain prediction regions for $R:=\max\left(\alpha_1 R_p(y_1, \hat{y}_1), \hdots, \alpha_T R_p(y_T,\hat{y}_T)\right)$.
\begin{theorem}\label{thm:2}
Let $Y$ be a trajectory drawn from $\mathcal{D}$ and let $\hat{Y}=h(Y_{T_{obs}},\hdots,Y_0)$ be predictions from the time series predictor $h$. Let $\delta \in (0,1)$ be a failure {probability} and $R_t^{(i)}$ be the prediction error as in \eqref{eq:non_pred_err} for times $t\in\{1,\hdots,T\}$ and calibration trajectories $i\in\{1,\hdots,n_1\}$ from $D_{cal,1}$. Let $\alpha_1,\hdots,\alpha_T$ be obtained by solving \eqref{eq:originalProblemMILP} or \eqref{eq:relaxedProblemLP}. Further, let $R^{(i)}$ be the nonconformity score as in \eqref{eq:cal_R_} for calibration trajectories $i\in\{n_1+1,\hdots,n\}$ from $D_{cal,2}$. Finally, let $R=\max\left(\alpha_1 R_p(Y_1, \hat{Y}_1), \hdots, \alpha_T R_p(Y_T,\hat{Y}_T)\right)$. Then
\begin{align}\label{eq:cpGuarantee}
    &\text{Prob}\left(R \leq C \right) \geq 1-\delta
\end{align}
where $C:=\text{Quantile}(\{ R^{(n_1+1)}, \hdots, R^{(n)},\infty \},1-\delta)$.

%\begin{proof}
%See \cref{app:thm2}
%\end{proof}
\end{theorem}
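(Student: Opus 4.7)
The plan is to reduce the statement to a direct application of the standard split conformal prediction guarantee (as recalled in Section~\ref{sec:intro_cp}), applied to the nonconformity score $R$ with the fixed parameters $\alpha_1,\dots,\alpha_T$. The key observation is that although $\alpha_1,\dots,\alpha_T$ are random (they depend on the first calibration dataset $D_{cal,1}$), the second calibration dataset $D_{cal,2}$ and the test trajectory $Y$ are drawn independently of $D_{cal,1}$, so we can condition on $D_{cal,1}$ and treat the $\alpha_t$'s as deterministic constants inside this conditional probability.

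First, I would condition on $D_{cal,1}$. Once this dataset is fixed, the parameters $\alpha_1,\dots,\alpha_T$ returned by either \eqref{eq:originalProblemMILP} or \eqref{eq:relaxedProblemLP} are deterministic functions of $D_{cal,1}$, and therefore the map
\[
Y^{(i)} \mapsto R^{(i)} := \max\bigl(\alpha_1 R_p(Y_1^{(i)},\hat{Y}_1^{(i)}),\dots,\alpha_T R_p(Y_T^{(i)},\hat{Y}_T^{(i)})\bigr)
\]
is a fixed (measurable) function, with $\hat{Y}^{(i)}$ in turn a deterministic function of $Y^{(i)}$ through the predictor $h$. Because the trajectories $Y^{(n_1+1)},\dots,Y^{(n)}$ and $Y$ are i.i.d.\ draws from $\mathcal{D}$, independent of $D_{cal,1}$, the resulting nonconformity scores $R^{(n_1+1)},\dots,R^{(n)},R$ are (conditionally on $D_{cal,1}$) i.i.d., hence exchangeable.

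Second, I would invoke the quantile lemma underlying split conformal prediction (e.g.\ \cite[Lemma~1]{tibshirani2019conformal}) to conclude that, conditionally on $D_{cal,1}$,
\[
\mathrm{Prob}\bigl(R \leq C \,\big|\, D_{cal,1}\bigr) \geq 1-\delta,
\]
with $C:=\mathrm{Quantile}(\{R^{(n_1+1)},\dots,R^{(n)},\infty\},1-\delta)$. The appended $\infty$ ensures that $C$ is well-defined even when $\lceil (n-n_1+1)(1-\delta)\rceil > n-n_1$, matching the convention in Section~\ref{sec:intro_cp}. Finally, marginalizing over $D_{cal,1}$ via the tower property yields the unconditional bound \eqref{eq:cpGuarantee}.

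The only subtle point, and the main thing to get right, is the separation of roles between the two calibration sets: the optimization problems \eqref{eq:originalProblemMILP} and \eqref{eq:relaxedProblemLP} use only $D_{cal,1}$, so $\alpha_1,\dots,\alpha_T$ are independent of $\{R^{(n_1+1)},\dots,R^{(n)},R\}$. If one tried to reuse $D_{cal,1}$ for both steps, exchangeability would break down, since the $R^{(i)}$'s for $i \leq n_1$ would influence the scoring rule applied to themselves and to $R$ asymmetrically. Once this independence is emphasized, the rest of the proof is essentially bookkeeping on top of the standard split CP guarantee.
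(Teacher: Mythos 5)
Your proposal is correct and follows essentially the same route as the paper's proof: exploit the split so that, with $\alpha_1,\dots,\alpha_T$ determined by $D_{cal,1}$, the scores from $D_{cal,2}$ together with the test score are exchangeable, and then apply \cite[Lemma 1]{tibshirani2019conformal}. Your version merely spells out the conditioning on $D_{cal,1}$ and the marginalization step that the paper leaves implicit, which is a welcome (and fully consistent) elaboration.
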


We point out the necessity to split the dataset $D_{cal}$ into the datasets $D_{cal,1}$ and $D_{cal,2}$ to conform to the exchangeability assumption in conformal prediction. To convert equation \eqref{eq:cpGuarantee} into practical conformal prediction regions, note that 
\begin{align*}
     R=&\text{max}\left(\alpha_1 \|Y_1-\hat{Y}_1\|,\hdots,\alpha_T \|Y_T-\hat{Y}_T\| \right)\le C \\
    \Leftrightarrow\;\;\;\;\;\;\;\; &  \|Y_t-\hat{Y}_t\|\le C/\alpha_t \; \; \forall t\in\{1,\hdots,T\}
\end{align*}

Intuitively, the conformal prediction regions provide geometric balls of radius $C/\alpha_t$ around each prediction $\hat{y}_t$ due to the use of the Euclidean norm. For an example, we refer to \Cref{fig:ORCATrace}. We remark that other choices of the prediction $R_p$ may lead to different shapes. We summarize this result next. 

\begin{corollary}
Let all conditions of Theorem \ref{thm:2} hold. Then
    \begin{align*}
        \text{Prob}(\|Y_t-\hat{Y}_t\| \le C/\alpha_t \; \; \forall t\in\{1,\hdots,T\}) \geq 1-\delta.
    \end{align*}
\end{corollary}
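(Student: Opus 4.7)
The plan is to observe that the Corollary is essentially a restatement of Theorem~\ref{thm:2} after rewriting the event $\{R \le C\}$ in a componentwise fashion. Specifically, I would start from the conclusion of Theorem~\ref{thm:2}, namely $\text{Prob}(R \le C) \ge 1-\delta$ with $R = \max(\alpha_1 \|Y_1-\hat Y_1\|,\ldots,\alpha_T\|Y_T-\hat Y_T\|)$, and argue that the event $\{R \le C\}$ coincides almost surely with the event $\{\|Y_t-\hat Y_t\| \le C/\alpha_t \text{ for all } t\in\{1,\ldots,T\}\}$.

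The key step is the chain of equivalences
\begin{align*}
R \le C \;\;\Longleftrightarrow\;\; \max_{t} \alpha_t \|Y_t-\hat Y_t\| \le C \;\;\Longleftrightarrow\;\; \alpha_t \|Y_t-\hat Y_t\| \le C \; \forall t \;\;\Longleftrightarrow\;\; \|Y_t-\hat Y_t\| \le C/\alpha_t \; \forall t,
\end{align*}
where the last equivalence requires $\alpha_t > 0$ for each $t$ so that the division is well-defined and preserves the inequality. To justify this positivity, I would invoke Theorem~\ref{thm:alphasNonZero}, which guarantees that any optimizer of \eqref{eq:highLevelOptimization_} (and hence of \eqref{eq:originalProblemMILP} or \eqref{eq:relaxedProblemLP}) satisfies $\alpha_t^*>0$ under the mild hypothesis that the calibration prediction errors $R_t^{(i)}$ are strictly positive, which is the standard situation in practice.

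Given the equivalence of the two events, their probabilities agree, so $\text{Prob}(\|Y_t-\hat Y_t\| \le C/\alpha_t \text{ for all } t) = \text{Prob}(R \le C) \ge 1-\delta$, which is exactly the stated bound. I do not anticipate any serious obstacle here: the only subtlety is the handling of $\alpha_t$, and that is already addressed by Theorem~\ref{thm:alphasNonZero}. The proof is essentially a one-line algebraic manipulation of Theorem~\ref{thm:2}, and the role of the Corollary is mostly to repackage the abstract scoring-function bound into the geometric prediction-ball form that is useful for downstream planning and verification tasks.
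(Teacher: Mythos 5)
Your proposal is correct and matches the paper's own argument: the paper likewise derives the corollary by noting the equivalence $R \le C \Leftrightarrow \|Y_t-\hat{Y}_t\| \le C/\alpha_t$ for all $t$ and then applying Theorem~\ref{thm:2}. Your explicit appeal to Theorem~\ref{thm:alphasNonZero} for the positivity of the $\alpha_t$ is a slightly more careful packaging of what the paper leaves implicit, but it is the same route.
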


%%%%%%%%%%%%%%%%%%%%%%%
%% Case Studies
%%%%%%%%%%%%%%%%%%%%%%%

\section{Case Study}
\label{sec:exps}

We consider two case studies to evaluate our approach. The first case study is about predicting pedestrian paths, while the second case study is about predicting the altitude of an F16 fighter jet performing ground avoidance maneuvers. We compare our method to the approach from \cite{stankeviciute2021conformal,lindemann2022safe,lindemann2023conformal} that is described in \Cref{sec:intro_cp}, which requires conservative union bounding to obtain valid prediction regions. We colloquially refer to this method as the union bound approach. We use Gurobi \cite{gurobi} to solve the optimization problems on a Windows machine running an Intel i7-855OU CPU with 4 cores and 16GB of RAM. The code for our case studies can be found on \url{https://github.com/earnedkibbles58/timeParamCPScores}.

Throughout this section, we use \eqref{eq:relaxedProblemLP} to compute the $\alpha$ values and we set $\delta:=0.05$.

\subsection{ORCA Pedestrian Simulator}
\label{exp:ORCA}

For \textcolor{black}{the first} case study, we analyze pedestrian location predictions using data generated from the ORCA simulator \cite{van2008reciprocal}. We use the social LSTM \cite{alahi2016social} architecture to make predictions up to $2.5$ seconds into the future at a rate of 8Hz (so that $T=20$ predictions in total) {with the previous $2.5$ seconds as input (so that $T_{obs}=20$)} using the trajnet++ framework \cite{kothari2021human}. For the exact details of training, we refer the reader to \cite{lindemann2022safe}.

We collect a dataset of $1291$ trajectories. We randomly select $646$ of these points for $D_{cal}$ and use the rest to form $D_{val}$, which we use for validation. We randomly select $50$  trajectories from $D_{cal}$ for $D_{cal,1}$  (to compute the $\alpha$ values) and the remaining trajectories for $D_{cal,2}$ (to compute the conformal prediction regions). We first show the results for one realization of this data sampling process and then we run $100$ trials to statistically analyze the results.

\subsubsection{Statistical Analysis}
\label{sec:caseStudyManyTrial}

For one realization, trajectories of the actual and predicted pedestrian locations are shown alongside the conformal prediction regions in \Cref{fig:ORCATrace}, the $\alpha$ values are shown in Appendix \ref{app:alphaVallORCA}, and scatter plots of the prediction errors over $D_{val}$ are shown in Appendix \ref{app:ORCAScatterPlots}.

Over $100$ trials, the average (trajectory level) coverage for our approach was $0.9558$ and the average coverage for the union bound approach was $0.9945$. Histograms of the validation coverage for the two approaches are shown in \Cref{fig:ORCAHists}. The union bound approach has larger coverage due to the conservatism introduced by union bounding. %Additionally, this demonstrates that naively using a $1-\delta$ prediction region for each time step does not result in valid trajectory level prediction regions.
To further quantify the how much less conservative our prediction regions are, we also computed the average size of the conformal regions for each time horizon, which are shown in \Cref{fig:ORCACPAreas}. This highlights that our method produces valid conformal prediction regions that are significantly smaller than previous approaches.

\begin{figure}[h!]
    \centering
    \begin{subfigure}[]{0.48\columnwidth}
    \centering
    \includegraphics[width=0.9\linewidth]{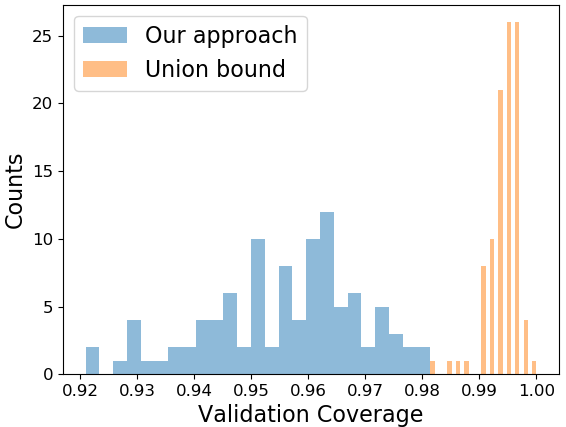}
        \caption{Validation coverage histograms }
        \label{fig:ORCAHists}
    \end{subfigure}
    \begin{subfigure}[]{0.48\columnwidth}
        \centering
        \includegraphics[width=0.9\linewidth]{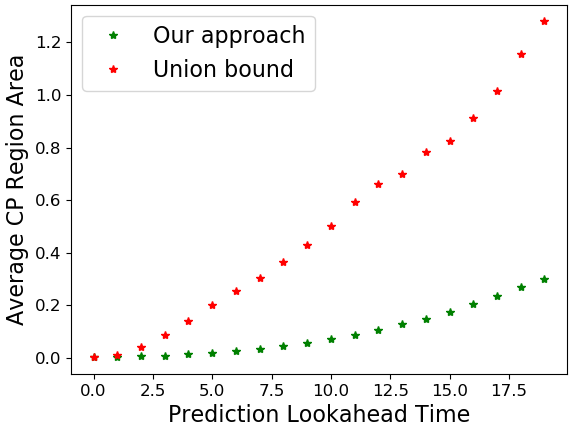}
        \caption{Average area of the conformal prediction regions }
        \label{fig:ORCACPAreas}
    \end{subfigure}
    \caption{Comparison of our approach and the union bound approach over $100$ trials for the ORCA case study.}
    \label{fig:ORCAFigs}
\end{figure}

\subsubsection{Original v. Relaxed Problem Difference}
\label{sec:alphaDiffs}

For each trial we also computed the objective function and $\alpha$ values using \Cref{eq:originalProblemMILP} to compare to those computed using \Cref{eq:relaxedProblemLP}. The average difference between the objective function values is $1.847*10^{-9}$. The average $l^2$ norm difference between the $\alpha$ values is $3.495*10^{-7}$ and the largest $l^2$ norm difference is $3.453*10^{-5}$. These differences are well within the solver's default tolerance of $10^{-4}$, which empirically validates the equivalence from \Cref{thm:1}. The average runtime for solving the original problem was $0.525$ seconds and the relaxed problem was $0.207$ seconds.

\subsection{F16 Fighter Jet}
\label{exp:f16}

For the second case study, we analyze the F16 fighter jet ground avoidance maneuver from the open source simulator \cite{Heidlauf2018}. We again use an LSTM to predict the altitude of the F16 fighter jet up to $2.5$ seconds into the future at a rate of 10Hz (so that is $T=25$ predictions in total) with the altitude from the past $2.5$ seconds as input. The trajectory predictor consists of an LSTM with two layers of width $25$ neurons each followed by two Relu layers with 25 neurons each and a final linear layer. We used $1500$ trajectories, each 5 seconds in duration, to train the network.

We collect a dataset of $1900$ trajectories, all of length 5 seconds. We randomly select $1000$ trajectories for $D_{cal}$ and the rest for $D_{val}$. We randomly select either $100$, $200$ or $400$ trajectories from $D_{cal}$ for $D_{cal,1}$ and the remaining trajectories in $D_{cal}$ for $D_{cal,2}$. We run $100$ trials to statistically analyze the approach, just as in \Cref{exp:ORCA}.

\subsubsection{Statistical Analysis}
\label{sec:f16Statistical}

The average (trajectory level) coverage for our approach was $0.953$ for each of $|D_{cal,1}|=400,200,100$ and the average coverage for the union bound approach was $0.993$. Histograms of the validation coverages for the union bound approach and our approach with $|D_{cal,1}|=400$ are shown in \Cref{fig:f16Hists}. The union bound approach has larger coverage due to the conservatism introduced by union bounding. For one realization trajectories of the actual and predicted altitudes along with the conformal prediction regions for our approach using $|D_{cal,1}|=400$ and the union bound approach are shown in Appendix \ref{app:f16SingleTrial}.

The variance of the coverage for our approach using $|D_{cal,1}|=400,200,100$ was $9.04*10^{-5}$, $7.63*10^{-5}$, and $5.89*10^{-5}$, while the variance of the coverage for the union bound approach was $1.51*10^{-5}$. For our approach, note that using larger $D_{cal,1}$ results in slightly higher variance for the coverage. That is because one is left with less data in $D_{cal,2}$ for running the conformal prediction step.

The average width of the conformal regions for each time horizon are shown in \Cref{fig:f16RegionWidths}. This demonstrates that our approach results in smaller prediction regions. Additionally, note that the prediction regions get slightly smaller as one uses more data for $D_{cal,1}$. That is because more data allows for selecting better $\alpha$ values. This highlights that there is a tradeoff in how much data one uses in $D_{cal,1}$ as opposed to $D_{cal,2}$. Making $D_{cal,1}$ larger will result in slightly smaller prediction regions, but the coverage of the regions will have higher variance, due to the smaller amount of data available in $D_{cal,2}$ for running the conformal prediction step.

\begin{figure}[h!]
    \centering
    \begin{subfigure}[]{0.48\columnwidth}
    \centering
    \includegraphics[width=\columnwidth]{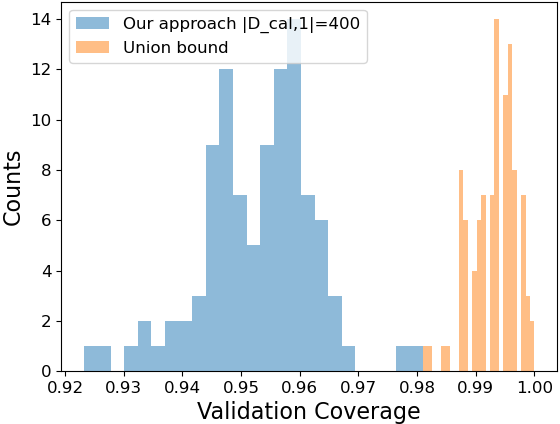}
        \caption{Validation coverage histograms for the union bound approach and our approach. }
        \label{fig:f16Hists}
    \end{subfigure}
    \begin{subfigure}[]{0.48\columnwidth}
    \centering
    \includegraphics[width=\columnwidth]{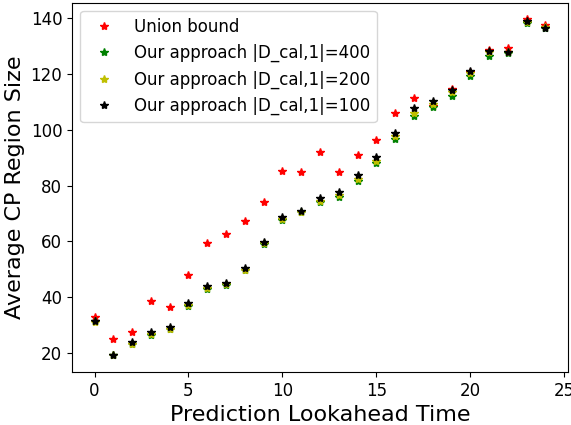}
        \caption{Average conformal prediction region width for our and the union bound approach. }
        \label{fig:f16RegionWidths}
    \end{subfigure}
    \caption{Comparison of our approach and the union bound approach over $100$ trials for the F16 case study.}
    \label{fig:F16Figs}
\end{figure}

\subsubsection{Original v. Relaxed Problem Difference}
\label{sec:f16AlphaDiffs}
We also computed the $\alpha$ values using \Cref{eq:originalProblemMILP} to compare to those computed using \Cref{eq:relaxedProblemLP}. In this case study, they were exactly the same, which empirically validates \Cref{thm:1}.

\subsection{Case Study Comparison}
\label{sec:caseStudyComparison}

Our method produced bigger relative improvements over the union bounding approach for the ORCA case study than for the F16 case study, see Figs. \ref{fig:ORCACPAreas} and \ref{fig:f16RegionWidths}. We hypothesize that this is due to the distributions of the non-conformity scores having longer tails in the ORCA case study, see histograms of individual prediction errors for each case study in Appendix \ref{app:NonConformityScoreQunatiles}. This explains the larger difference in the ORCA case study between our approach, which looks at the $1-\delta$ quantile, and the union bounding approach, which looks at the $1-\delta/T$ quantile. The $1-\delta$ and $1-\delta/T$ quantiles for the two case studies are shown in \Cref{fig:errQuantilesf16_}.  The difference in quantile values is much larger for ORCA than the F16 case study, confirming our hypothesis.

\begin{figure}[h!]
    \centering
    \begin{subfigure}[]{0.48\columnwidth}
        \centering
        \includegraphics[width=0.9\linewidth]{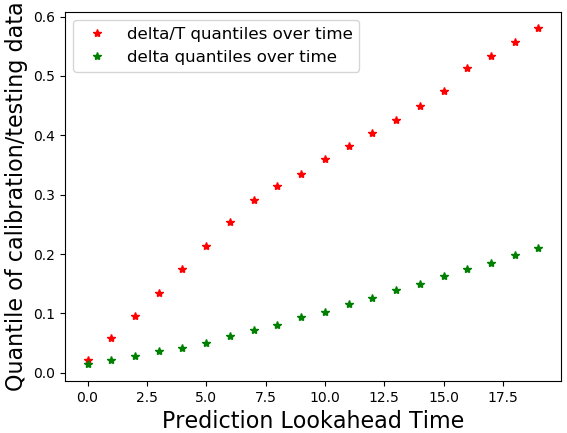}
        \caption{Pedestrian case study} 
        \label{fig:errQuantilesORCA}
    \end{subfigure}
    \begin{subfigure}[]{0.48\columnwidth}
        \centering
        \includegraphics[width=0.9\linewidth]{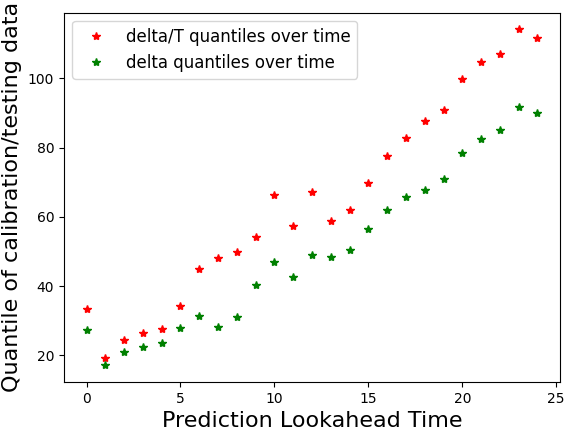}
        \caption{F16 case study} 
        \label{fig:errQuantilesf16}
    \end{subfigure}
    \caption{Non-conformity score quantiles for the calibration and testing data for the two case studies.}
    \label{fig:errQuantilesf16_}
\end{figure}

%%%%%%%%%%%%%%%%%%%%%%%
%% Conclusion
%%%%%%%%%%%%%%%%%%%%%%%

%\input{introduction}
%\input{background}
%\input{problemFormulation}
%\input{approach}
%\input{caseStudies}
\section{Conclusion}
\label{sec:conclusion}

In this paper, we have presented a method for producing conformal prediction regions for time series data that are significantly tighter than previous approaches. To do this, we define a parameterized non-conformity score function and use optimization to fit the function to calibration data. This allows us to use standard inductive conformal prediction to get valid prediction regions. For future work, we plan to integrate our conformal prediction approach into planning and control frameworks and apply parameterized non-conformity score functions to the non-time series setting.

\section*{Acknowledgements}
This work was generously supported by NSF award SLES-2331880.

\bibliography{literature}
%\bibliography{aaai24}

\onecolumn
\appendix

\section{Proof of \Cref{thm:alphasNonZero}}
\label{app:alphasNonzeroProof}

\begin{proof}[Proof of Theorem \ref{thm:alphasNonZero}]

 We argue that $\alpha_1^*, \hdots,\alpha_T^*>0$  by contradiction. Let therefore $\alpha_1, \hdots,\alpha_T$  be a feasible (not necessarily optimal) solution to the optimization problem in \eqref{eq:highLevelOptimization_} with $\alpha_t=0$ for some $t\in\{1,\hdots,T\}$. %, I will select a small, but positive, value for $\alpha_i$ and show that the optimal value $\alpha_i^*$ is larger than that value.

%First, assume that there is at least one non-zero non-conformity score for prediction horizon 1: $$max(R_1^{1},\hdots,R_1^{(n_1)}) > 0$$

%First, assume that none of the non-conformity scores $R_t^{(i)}$ are zero or negative. Note that if there are some zero non-conformity scores, one can perturb them slightly to ensure they are positive.

First, note that at least one $\alpha_1,\hdots,\alpha_T$ must be positive due to constraint \eqref{eq:highLevelOptimizationc}, i.e., there exists $\alpha_\tau>0$ for some $\tau\neq t$. Because $R_{t}^{(i)}>0$ for all $t\in\{1,\hdots,T\}$ and for all $i\in\{1,\hdots,n_1\}$ by assumption, i.e., no $R_t^{(i)}$ is zero, and there exists $\alpha_\tau>0$ for some $\tau\neq t$, it follows that $$ \max(\alpha_1 R_1^{(j)},\hdots,\alpha_T R_T^{(j)}) > 0.$$

Consequently, we know that
\begin{align}
    0=\alpha_t R_t^{(i)} < \max(\alpha_1 R_1^{(i)},\hdots,\alpha_T R_T^{(i)}) = R^{(i)} \; \; \forall i \in \{1,\hdots,n_1\}.
\end{align}

Now, since the maximum is a continuous function and the Quantile is a continuous, monotone function with respect to the $R^{(i)}$ values \footnote{Note that the Quantile function is not continuous with respect to the $1-\delta$ input, but that input remains constant in this case.}, we can decrease the value of the objective function in \eqref{eq:highLevelOptimization} by increasing the value of $\alpha_t$ (from 0) while decreasing the values of $\alpha_\tau$ with $\tau \neq t$. %$\alpha_1, \hdots, \alpha_{t-1},\alpha_{t+1},\hdots,\alpha_T$. 
By contradiction, it holds that the optimal value $\alpha_t^*$ will be non-zero, i.e, $\alpha_t^*>0$. 

\end{proof}

\section{Proof of \Cref{thm:1}}
\label{app:thm1}

\begin{proof}[Proof of Theorem \ref{thm:1}]
    Assume that $\alpha_1^*,\hdots,\alpha_T^*,R^{(1),*},\hdots,R^{(n_1),*}$ is an optimal solution to the optimization problem in equation \eqref{eq:problemWithoutBinaryVars}. We will first show that we can use $\alpha_1^*,\hdots,\alpha_T^*,R^{(1),*},\hdots,R^{(n_1),*}$ to construct a feasible solution to the optimization problem in \eqref{eq:highLevelOptimizationMaxMILP}, which we denote by $\bar{\alpha}_1^*,\hdots,\bar{\alpha}_T^*,\bar{R}^{(1),*},\hdots,\bar{R}^{(n_1),*}$, without altering the optimal value of the objective function. Particularly, let
    \begin{subequations}
    \begin{align}
        \bar{\alpha}_t^* &:= \alpha_t^*, \; t=1,\hdots,T \label{eq:exSolnMILPAlphas} \\
        \bar{R}^{(i),*} &:= \text{max}(\bar{\alpha}^*_1 R^{(i)}_1, \hdots, \bar{\alpha}^*_T R^{(i)}_T), \; i=1,\hdots,n_1 \label{eq:exSolnMILPRs}
    \end{align}
    \end{subequations}

    Note that there exists binary variables $\bar{b}_t^{(i)}$ along with which $\bar{\alpha}_1^*,\hdots,\bar{\alpha}_T^*,\bar{R}^{(1),*},\hdots,\bar{R}^{(n_1),*}$ are a feasible solution to the optimization problem in \eqref{eq:highLevelOptimizationMaxMILP}. This follows since \eqref{eq:exSolnMILPAlphas} satisfies \eqref{eq:highLevelOptimizationc} and \eqref{eq:highLevelOptimizationd} and \eqref{eq:exSolnMILPRs} satisfies \eqref{eq:highLevelOptimizationb} and consequently (\ref{eq:highLevelOptimizationMaxMILPb}-\ref{eq:highLevelOptimizationMaxMILPd}) for appropriate choices of $\bar{b}_t^{(i)}$ (and $M$). 

    Note specifically that $\bar{\alpha}_t^*$ and $\alpha_t^*$ are the same, while this does not necessarily hold for ${R}^{(i),*}$ and $\bar{R}^{(i),*}$. However, it is guaranteed that $\bar{R}^{(i),*} \leq R^{(i),*} \; \forall i=1,\hdots,n_1$ as the constraint in \eqref{eq:problemWithoutBinaryVarsb} ensures that $R^{(i),*}$ is lower bounded by $\text{max}(\alpha^*_1 R^{(i)}_1, \hdots, \alpha^*_T R^{(i)}_T)=\text{max}(\bar{\alpha}^*_1 R^{(i)}_1, \hdots, \bar{\alpha}^*_T R^{(i)}_T)$. Since the Quantile function is a monotone operator (its value is non-increasing if inputs are non-increasing), the value of the objective function in \eqref{eq:highLevelOptimizationMaxMILP} that corresponds to $\bar{\alpha}_1^*,\hdots,\bar{\alpha}_T^*,\hdots,\bar{R}^{(1),*},\bar{R}^{(n_1),*}$ cannot be greater than the value of the objective function in \eqref{eq:problemWithoutBinaryVars} that corresponds to $\alpha_1^*,\hdots,\alpha_T^*,R^{(1),*},\hdots,R^{(n_1),*}$.

    At the same time, we note that the optimal value of the optimization problem in \eqref{eq:problemWithoutBinaryVars} cannot be greater than the optimal value of the optimization problem in \eqref{eq:highLevelOptimizationMaxMILP} because it has strictly fewer constraints than \eqref{eq:problemWithoutBinaryVars} so that the feasible set of \eqref{eq:problemWithoutBinaryVars} contains the feasible of \eqref{eq:highLevelOptimizationMaxMILP}. As a consequence, the optimal values of \eqref{eq:highLevelOptimizationMaxMILP} and \eqref{eq:problemWithoutBinaryVars} are equivalent.

\end{proof}

\section{KKT Conditions of \eqref{eq:quantileAsLP}}
\label{app:KKTConds}

The KKT conditions of the LP in \eqref{eq:quantileAsLP} are:
\begin{subequations} \label{eq:KKTconds}
\begin{align}
    & (1-\delta) - u_i^+ + v_i = 0, \; \; i=1,\hdots,n_1 \label{eq:KKTcondsa} \\
    & \delta - u_i^- - v_i = 0, \; \; i=1,\hdots,n_1 \label{eq:KKTcondsb}  \\
    & \sum_{i=1}^{n_1} v_i = 0 \label{eq:KKTcondsc}  \\
    & u_i^+ e_i^+ = 0, \; \; i=1,\hdots,n_1 \label{eq:KKTcondsd}  \\
    & u_i^- e_i^- = 0, \; \; i=1,\hdots,n_1 \label{eq:KKTcondse}  \\
    & e_i^+ \geq 0, \; \; i=1,\hdots,n_1 \label{eq:KKTcondsf}  \\
    & e_i^- \geq 0, \; \; i=1,\hdots,n_1 \label{eq:KKTcondsg}  \\
    & e_i^+ + q - e_i^- - R^{(i)} = 0, \; \; i=1,\hdots,n_1 \label{eq:KKTcondsh}  \\
    & u_i^+ \geq 0, \; \; i=1,\hdots,n_1 \label{eq:KKTcondsi}  \\
    & u_i^- \geq 0, \; \; i=1,\hdots,n_1 \label{eq:KKTcondsj} 
\end{align}
\end{subequations}
Here, the constraints \eqref{eq:KKTcondsa}, \eqref{eq:KKTcondsb} and \eqref{eq:KKTcondsc} encode the stationarity condition, the constraints \eqref{eq:KKTcondsd} and \eqref{eq:KKTcondse} encode complementary slackness, the constraints \eqref{eq:KKTcondsf}, \eqref{eq:KKTcondsg}, and \eqref{eq:KKTcondsh} encode primal feasibility, and the constraints \eqref{eq:KKTcondsi} and \eqref{eq:KKTcondsj} encode dual feasibility.

\section{Proof of Theorem \ref{thm:2}}
\label{app:thm2}
\begin{proof}[Proof of Theorem \ref{thm:2}]
    Since $\alpha_1,\hdots,\alpha_T$ are obtained from $D_{cal,1}$, the non-conformity scores $R^{(i)}$ which are computed from $D_{cal,2}$ are exchangeable. Consequently, we can apply \cite[Lemma 1]{tibshirani2019conformal} to get that \Cref{eq:cpGuarantee} holds.
\end{proof}

\section{Alpha Values from ORCA Case Study}
\label{app:alphaVallORCA}

The $\alpha$ values for one realization of the ORCA case study are shown in \Cref{tab:alphaVals}. They decrease as the prediction horizon increases, which is what we expect as the prediction error generally increases over time.

\begin{table}[h]
\caption{Values of $\alpha$ parameters for ORCA case study.}
    \centering
    \begin{tabular}{|c|c|c|c|}
    \hline Time step  & $\alpha$ &  Time step  & $\alpha$  \\ \hline
    1 & 0.21134 & 11 & 0.0267 \\ \hline
    2 & 0.1495 & 12 & 0.0229 \\ \hline
    3 & 0.1057 & 13 & 0.0207 \\ \hline
    4 & 0.0945 & 14 & 0.0189 \\ \hline
    5 & 0.0668 & 15 & 0.0171 \\ \hline
    6 & 0.0555 & 16 & 0.0158 \\ \hline
    7 & 0.0423 & 17 & 0.0148 \\ \hline
    8 & 0.0371 & 18 & 0.0138 \\ \hline
    9 & 0.0330 & 19 & 0.0125 \\ \hline
    10 & 0.0290 & 20 & 0.0118 \\ \hline
    %11 & 0.02673723188095891 \\ \hline
    %12 & 0.022926987191058158 \\ \hline
    %13 & 0.020727159125030377 \\ \hline
    %14 & 0.018906077973183198 \\ \hline
    %15 & 0.01708876729282434 \\ \hline
    %16 & 0.015843329337612663 \\ \hline
    %17 & 0.014763166598156663 \\ \hline
    %18 & 0.013818106083353586 \\ \hline
    %19 & 0.01253185369011326 \\ \hline
    %20 & 0.011844090928588787 \\
    \end{tabular}
    \label{tab:alphaVals}
\end{table}

\section{ORCA Case Study Scatter Plots}
\label{app:ORCAScatterPlots}

Scatter plots of the prediction errors over $D_{val}$ and the conformal prediction region radii for one realization of the ORCA case study are shown in \Cref{fig:ORCAErrScatter}.

\begin{figure*}[h!]
    \centering
    \begin{subfigure}[b]{0.32\textwidth}
        \centering
        \includegraphics[width=0.9\textwidth]{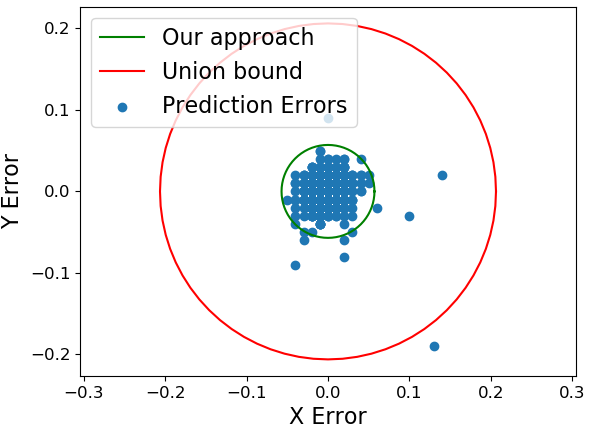}
        \caption{5 step prediction error} 
        \label{fig:4stepAhead}
    \end{subfigure}
    \begin{subfigure}[b]{0.32\textwidth}
        \centering
        \includegraphics[width=0.9\linewidth]{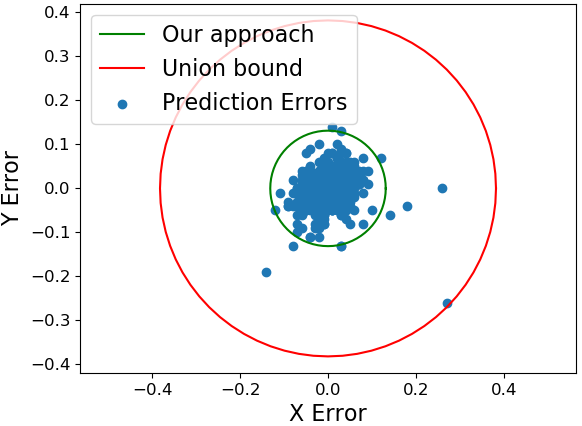}
        \caption{10 step prediction error} 
        \label{fig:10stepAhead}
    \end{subfigure}
    \begin{subfigure}[b]{0.32\textwidth}
        \centering
        \includegraphics[width=0.9\linewidth]{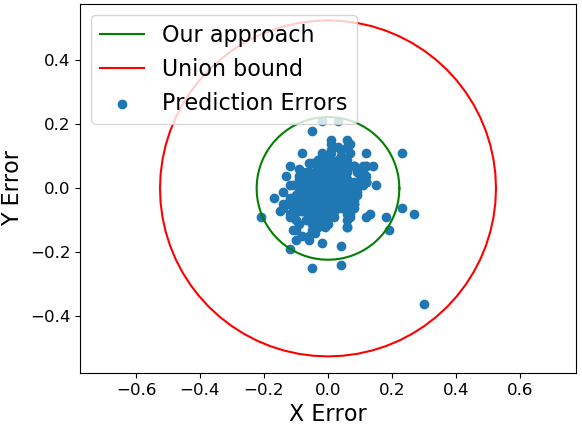}
        \caption{15 step prediction error} 
        \label{fig:15stepAhead}
    \end{subfigure}
    \caption{Conformal prediction regions and errors for various prediction horizons. The green circles show the regions produced by our approach and the red circles show the regions produced by the union bound approach. Notably, our approach produces much tighter valid predictions regions.}
    \label{fig:ORCAErrScatter}
\end{figure*}

\section{F16 Single Trial}
\label{app:f16SingleTrial}
For one realization of the F16 case study, a trajectory of the actual and predicted F16 altitudes, along with the conformal prediction regions for our approach using $|D_{cal,1}|=400$ and the approach from \cite{lindemann2022safe} is shown in \Cref{fig:f16Trace}. %\textcolor{red}{I'm confused, why predicted pedestrian locations? Also clarify that we compare to the approach from \cite{lindemann2022safe} to make it crystal clear.}

\begin{figure}[h!]
    \centering
    \includegraphics[width=0.33\textwidth]{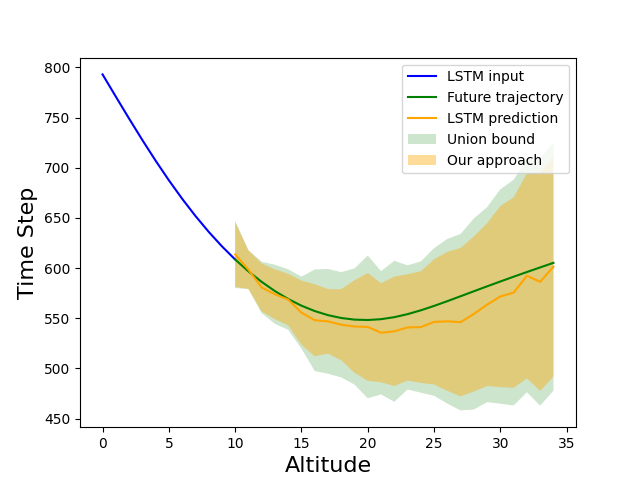}
        \caption{Sample trajectory of an F16 fighter jet (blue and green lines), the future predictions using an LSTM (orange line), the conformal prediction regions for our approach with $|D_{cal,1}=400|$ (orange shaded region), and the conformal prediction regions for the approach from \cite{lindemann2022safe} (green shaded region).}
        \label{fig:f16Trace}
\end{figure}

\section{Non-conformity Score Quantiles for Case Studies}
\label{app:NonConformityScoreQunatiles}

Histograms of individual prediction errors for each case study are shown in \Cref{fig:errHists}.

\begin{figure*}[h!]
    \centering
    \begin{subfigure}[b]{0.33\textwidth}
        \centering
        \includegraphics[width=\textwidth]{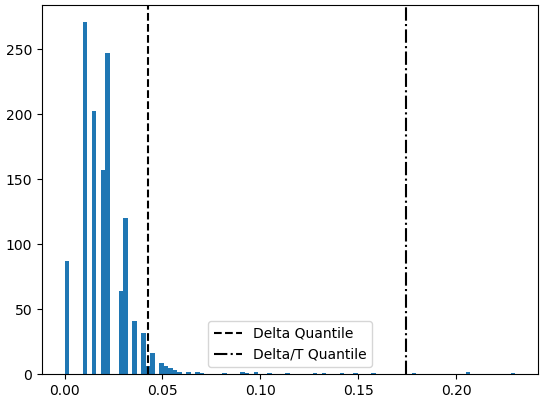}
        \caption{Non-conformity score histogram for one prediction time step of the calibration and testing data from the pedestrian prediction case study }
        \label{fig:errHistORCA}
    \end{subfigure}
    \begin{subfigure}[b]{0.33\textwidth}
        \centering
        \includegraphics[width=\linewidth]{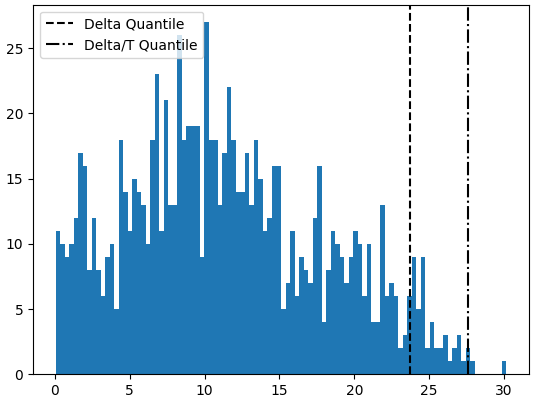}
        \caption{Non-conformity score histogram for one prediction time step of the calibration and testing data from the F16 case study} 
        \label{fig:errHistf16}
    \end{subfigure}
    \caption{Non-conformity score quantiles for the calibration and testing data from the two case studies.}
    \label{fig:errHists}
\end{figure*}

%% uncomment this line to force a page break if you want to break up the technical appendix from the code readme
\clearpage

\end{document}